\newcommand{\minitab}[1]{\begin{tabular}{@{}c@{}}#1\end{tabular}}
\NewDocumentCommand{\E}{o m}{
\mathbb{E}\IfValueTF{#1}{_{#1}}{}\left[#2\right]
}
\newcommand{\indr}[1]{\mathbf{1}\left\{ #1 \right\}}
\newcommand{\CU}{\mathcal{U}}
\newcommand{\CV}{\mathcal{V}}
\newcommand{\CE}{\mathcal{E}}
\newcommand{\CS}{\mathcal{S}}
\newcommand{\CL}{\mathcal{L}}
\newcommand{\VS}[1]{\text{VS\textsuperscript{#1}}}
\newcommand{\RW}[1]{\text{RW\textsuperscript{#1}}}
\newtheorem{theorem}{Theorem}
\newtheorem{example}{Example}
\title{Design of Efficient Sampling Methods on Hybrid Social-Affiliation
Networks\\{\large Technique Report}}
\author{
\IEEEauthorblockN{
Junzhou Zhao\IEEEauthorrefmark{1} \quad
John C.S. Lui\IEEEauthorrefmark{2} \quad
Don Towsley\IEEEauthorrefmark{3} \quad
Pinghui Wang\IEEEauthorrefmark{4} \quad
Xiaohong Guan\IEEEauthorrefmark{1}
}
\IEEEauthorblockA{
\IEEEauthorrefmark{1}Xi'an Jiaotong University, China\\
\IEEEauthorrefmark{2}The Chinese University of Hong Kong, Hong Kong\\
\IEEEauthorrefmark{3}University of Massachusetts Amherst, USA\\
\IEEEauthorrefmark{4}Huawei Noah's Ark Lab, Hong Kong\\
}
\IEEEauthorblockA{
\{jzzhao,xhguan\}@sei.xjtu.edu.cn \,
cslui@cse.cuhk.edu.cn \,
towsley@cs.umass.edu \,
wang.pinghui@huawei.com
}}
\begin{document}
% make the title area
\maketitle

\begin{abstract}
%\boldmath
Graph sampling via crawling has become increasingly popular and important
in the study of measuring various characteristics of large scale complex
networks.
While powerful, it is known to be challenging when the graph is
loosely connected or disconnected which slows down the convergence of
random walks and can cause poor estimation accuracy.

In this work, we observe that the graph under study, or called \emph{target
graph}, usually does not exist in isolation.
In many situations, the target graph is related to an \emph{auxiliary
graph} and an \emph{affiliation graph}, and the target graph becomes well
connected when we view it from the perspective of these three graphs
together, or called a \emph{hybrid social-affiliation graph} in this paper.
When directly sampling the target graph is difficult or inefficient, we can
\emph{indirectly} sample it efficiently with the assistances of the other
two graphs.
We design three sampling methods on such a hybrid social-affiliation
network.
Experiments conducted on both synthetic and real datasets demonstrate
the effectiveness of our proposed methods.

\end{abstract}

%\IEEEpeerreviewmaketitle

\section{\textbf{Introduction}} \label{sec:intro}

Online social networks (OSNs) such as Facebook, Sina Weibo, and Twitter
have attracted researchers' much attention in recent years because of their
ever-increasing popularity and importance in our daily
lives~\cite{Newman2003,Watts2004,Mislove2007,Lazer2009,Aral2012a}.
An OSN not only provides a platform for people to connect with their
friends, but also provides an opportunity for researchers to study user
characteristics, which are valuable for applications such as marketing
decision making.
For example, Twitter users' tweeting
activities (e.g., number of tweets related to a movie) can be used to
predict movie box-office revenues~\cite{Asur2010}, and Twitter users' mood
characteristics have a relation with stock market prices~\cite{Bollen2011}.
Therefore, measuring user characteristics in OSNs is an important task.

Exactly calculating user characteristics requires the complete OSN data.
However, for third parties who do not own the data can only rely on public
APIs to crawl the OSN.
To protect user privacy, OSNs usually impose barriers to
limit third parties' large-scale crawling~\cite{Mondal2012} and restrict
the rate of requesting APIs (e.g., Sina Weibo allows a user to issue at
most 150 requests per hour~\cite{WeiboLimit}).
As a result, crawling the complete data of a large-scale OSN is practically
impossible.

To address this challenge, sampling methods are developed, i.e., a small
fraction of OSN users are sampled and used to calculate the
characteristics.
In the literature, random walk based sampling methods have become
popular~\cite{Lovasz1993,Massoulie2006,Gjoka2011}.
A random walker starts from an initial node in the OSN, and randomly
selects a neighbor to visit at the next step; this process repeats until
the sampling budget is exhausted.
The random walk sampling can generate Markov chain samples which are able
to provide unbiased estimates of graph statistics~\cite{Meyn2009}.

\smallskip\noindent\textbf{Motivation:}
If a graph has community structure, the random walk will suffer from
\emph{slow mixing}, i.e., requiring a long burning period to reach the
steady state, which results in a substantially large number of samples
so as to keep estimation accuracy.
Recent studies have found that the mixing time in several real-world
networks is much longer than expected~\cite{Mohaisen2010}.
To overcome the slow mixing problem, one effective approach is to allow the
random walker(s) to randomly jump to (or start from) different regions of
a network, such as \emph{random walk with jumps}
(RWwJ)~\cite{Avrachenkov2010,Ribeiro2012b} and \emph{Frontier sampling}
(FS)~\cite{Ribeiro2010}.
These methods explicitly or implicitly assume that {\em random vertex
sampling} is enabled.
For example, in RWwJ, the walker can randomly jump to other nodes while
walking, and the initializing step in FS relies on uniform vertex sampling.
However, random vertex sampling can be resource-intensive when the
\emph{effective} account ID space is very sparsely populated such as the
following example.

\begin{example}\label{exam:weibo}
A restaurant company wants to build a new chain store in one of two small
candidate cities in China.
A market surveyor is sent to study the consuming ability of citizens
there.
Since most citizens use the check-in service~\cite{WeiboPlace} to share
their consuming information in Weibo, the surveyor decides to use Weibo
as a platform to conduct his research.
He plans to uniformly sample two collections of Weibo users in the two
cities respectively.
It is known that every Weibo account ID consists of ten digits ranging from
``1000000000'' to the maximum\footnote{By March 25, 2014, the maximum Weibo
user ID is about ``5058913818''.}.
He generates random numbers in this range as test IDs and finds that about
$11\%$ of the test IDs are valid Weibo users. 
However, because the population sizes of the two cities are small (e.g.,
hundreds of thousands of citizens comparing to the hundreds of millions of
Weibo users), the valid users falling into the two cities has probability
as small as $0.1\%$.
\end{example}

In the above example, an effective test ID must fall into one of the two
cities, and random vertex sampling becomes extremely inefficient because
the probability that a test ID is effective equals $P(\textit{ID is
valid})\times P(\textit{ID falls into one city})$, where $P(\textit{ID is
valid})\approx 0.11$ and $P(\textit{ID falls into one city})\approx
10^{-3}$.
This results in that the surveyor needs to try $10^4$ times on average to
obtain a valid ID falling in one of the two cities. 
To make matters worse, in some OSNs such as Pinterest, account IDs are
arbitrary-length strings, which makes random vertex sampling practically
impossible.
So, {\em how can we sample vertex randomly in an OSN when random vertex
sampling is extremely inefficient or impractical at all?}

\smallskip\noindent\textbf{Present Work:}
In Example~\ref{exam:weibo}, the key problem is how to effectively sample
Weibo users within the two cities.
We notice that the check-ins shared by users often contain the venue
information, e.g., in which restaurant the user lunched, and most such OSNs
(e.g., Foursquare) provide APIs for querying venues (e.g., restaurants)
within an area of interest by specifying a rectangle region with the
bottom-left and top-right corners latitude-longitude coordinates given, or
a circle region with the center point latitude-longitude coordinate and
radius given.
This function can be used to design efficient sampling methods for sampling
venues on a map~\cite{Li2012,Li2014,Wang2014}.
Since we can sample venues within an area easily, we are able to
\emph{indirectly} sample Weibo users in an area by {\em relating users to
venues through check-in relationships between them}.
This will be more efficient than directly sampling users in an area.
We leave the detailed design of this sampling method in
Section~\ref{sec:methods} and evaluate it in Section~\ref{sec:experiment}.

More than solving a particular problem in Example~\ref{exam:weibo}, 
we are inspired to study a more generalized problem.
If we consider the venues in Example~\ref{exam:weibo} as another type of
nodes besides user nodes, we can build three graphs, i.e.,
(1) a user graph formed by users and their relationships,
(2) a venue graph formed by venues and their relationships (the edge set
can be empty as in Example~\ref{exam:weibo}),
and (3) a bipartite graph formed by users, venues and their check-in
relationships.
What we learned from Example~\ref{exam:weibo} is that,
when \emph{directly} sampling the user graph is very difficult or extremely
inefficient, we can try to sample the venue graph (which is easier as in
Example~\ref{exam:weibo}), and the bipartite graph acts as a bridge to
connect them.
This approach facilitates us to sample user graph \emph{indirectly} but
efficiently.
Because the \emph{affiliation relationship} between users and venues
plays an important role in these graphs, we refer to the three graphs as
a {\em hybrid social-affiliation network} jointly.
The formal definition of hybrid social-affiliation network will be given
in Section~\ref{sec:problem}, and the detailed design of sampling
methods on hybrid social-affiliation networks will be depicted in
Section~\ref{sec:methods}.

\smallskip\noindent\textbf{Contributions:}
Overall, we have three main contributions:
\begin{itemize}
\item (\emph{Problem Novelty}) We define the idea of hybrid
social-affiliation network and formulate a sampling problem over it.
(Section~\ref{sec:problem}).
\item (\emph{Solution Novelty}) We design three efficient sampling methods
over such a network.
These methods facilitate us to indirectly sample a graph efficiently when
directly sampling it is difficult (Section~\ref{sec:methods}).
\item We conduct extensive experiments to validate the proposed methods
over both synthetic and real-world networks (Section~\ref{sec:experiment}).
\end{itemize}

\section{\textbf{Problem Formulation}} \label{sec:problem}

In this section, we first define the graph characteristics that we want to
measure in this work, and then formally define the hybrid
social-affiliation network along with the sampling problem over it.

\subsection{Graph Characteristics}

We model an OSN by an undirected graph $G(\CU,\CE)$, where $\CU$ and
$\CE$ are the sets of users and relations among users, respectively.
Users in $G$ are labeled. Let $\CL=\{l_1,\ldots,l_W\}$ be a set of user
labels of size $W$.
We map each user $u\in\CU$ to a subset of labels he owned by a set function
called {\em characteristic function} $L\colon\CU\mapsto 2^\CL$.
For example, if $\CL=\{\text{male},\text{female}\}$, then $L(u)$ represents
the gender of $u$.

In many applications, we are interested in measuring the fractions of users
having some labels, e.g., the fraction of male/female customers buying a
product. This can be represented by the label distribution
$\theta=\{\theta_l\}_{l\in\CL}$, where $\theta_l$ is the fraction of users
with label $l$.
That is
\[
\theta_l=\frac{1}{n}\sum_{u\in\CU}\indr{l\in L(u)},\quad l\in\CL,
\]
where $n=|\CU|$ is the size of graph $G$, and $\indr{\cdot}$ is the 
indicator function. When the graph size $n$ is known or can be
estimated~\cite{Katzir2011,Hardiman2013}, we can also obtain the absolute
volume of users having a label $l$ by $n\theta_l$.

With this definition of graph characteristics, the objective of sampling
then becomes how to collect samples (i.e., nodes) from graph $G$ and design
estimators to estimate parameters $\{\theta_l\}_{l\in\CL}$ based on these
samples.

\subsection{Hybrid Social-Affiliation Networks}

Example~\ref{exam:weibo} motivates us to define a {\em hybrid
social-affiliation network} when directly sampling graph $G$ is difficult
or inefficient.
A hybrid social-affiliation network consists of three graphs:
$G(\CU,\CE)$, $G'(\CV,\CE')$, and $G_b(\CU,\CV,\CE_b)$, where
$\CU,\CV$ are sets of nodes and $\CE,\CE',\CE_b$ are sets of edges. 
In detail,
\begin{itemize}
\item $G(\CU,\CE)$ is the \emph{target graph} whose characteristics
$\theta$ are of interest and need to be measured, e.g., the user social
network in Example~\ref{exam:weibo}.
\item $G'(\CV,\CE')$ is an \emph{auxiliary graph} which can be sampled
more easily or efficiently than sampling the target graph, e.g.,
the venue graph (with $\CE'=\emptyset$) in Example~\ref{exam:weibo}.
\item $G_b(\CU,\CV,\CE_b)$ is an \emph{affiliation graph}
\cite[Chapter 8]{Wasserman1994} which is a bipartite graph connecting 
nodes in the target and auxiliary graphs, e.g., the graph formed by
users, venues and their check-in relationships in Example~\ref{exam:weibo}.
\end{itemize}

\begin{figure}
\centering
\begin{tikzpicture}[
und/.style={draw,thick,circle,minimum size=6pt,inner sep=0},
vnd/.style={draw=blue,thick,rectangle,minimum size=6pt,inner sep=0},
att/.style={left,fill=white,minimum size=0,inner sep=0,anchor=west}]
% draw G1 layer
\draw[thick] (0,0) -- (4,0) -- (6,1.2) -- (2,1.2) -- (0,0);
\node[und] (u11) at (1.3,0.2) {};
\node[und,above right = 0.5 and 0.3 of u11] (u12) {};
\node[und,above right = 0.4 and 1.0 of u11] (u13) {};
\node[und,right = 1 of u11] (u14) {};
\node[und,above right = 0.1 and 0.5 of u11] (u15) {};
\draw[thick] (u11)--(u12)--(u13)--(u14)--(u11)--(u15)--(u12) (u15)--(u14);
\node[und] (u21) at (3.2,0.35) {};
\node[und,above right = 0.5 and 0.5 of u21] (u22) {};
\node[und,above right = 0.4 and 1.2 of u21] (u23) {};
\node[und,right = 0.7 of u21] (u24) {};
\node[und,above right = 0.1 and 0.5 of u21] (u25) {};
\draw[thick] (u21)--(u22)--(u23)--(u25)--(u24)--(u21) (u22)--(u25)
(u23)--(u24);
% draw G2 layer
\node[vnd](v1) at (1.5,2.1) {};
\node[vnd,above right = 0.4 and 0.6 of v1] (v2) {};
\node[vnd,above right = 0.3 and 2.5 of v1] (v3) {};
\node[vnd,right = 1.5 of v1] (v4) {};
\draw[thick,blue] (v1) -- (v2) -- (v3) -- (v4) -- (v2);
\draw[blue,thick] (0,1.7) -- (4,1.7) -- (6,3.0) -- (2,3.0) -- (0,1.7);
% draw E_b
\draw[dashed,red,thick] (u11)--(v1) (v2)--(u12) (u14)--(v1)
(u15)--(v2)--(u13);
\draw[dashed,red,thick] (v2)--(u25) (v3)--(u22)--(v4)--(u21)
(u24)--(v3)--(u23);
% draw annotations
\node[att] at (.4,0.13) {$G$};
\node[att] at (.4,1) {$\color{red} G_b$};
\node[att] at (.4,1.85) {$\color{blue} G'$};
\node[att] at (5,1.0) {target graph};
\node[att] at (5,2.8) {\textcolor{blue}{auxiliary graph}};
\node[att] at (5,1.9) {\textcolor{red}{affiliation graph}};
\end{tikzpicture}

\caption{An illustration of a hybrid social-affiliation network.
The target graph together with auxiliary and affiliation graphs form a
better connected graph than target graph itself, which improves sampling
efficiency.}
\label{fig:exam}
\end{figure}
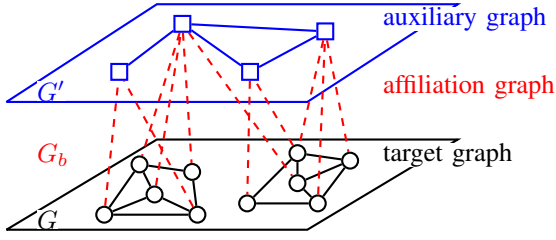

An example of such a hybrid social-affiliation network is given in
Fig.~\ref{fig:exam}.
In addition to Example~\ref{exam:weibo}, many other measuring problems can
be formed as a hybrid social-affiliation network sampling problem.
As another case, let us consider the following example.

\begin{example}\label{exam:mtime}
Mtime.com~\cite{Mtime} is an online movie database in China.
Users in Mtime can follow each other to form a social network. 
Moreover, a user can follow movie actors if he is a fan of the actor.
The movie actors can also form connections if they cooperated in a same
movie.
\end{example}

In Example~\ref{exam:mtime}, if we want to measure the characteristics of
the graph formed by Mtime users, and directly sampling users is inefficient
(because of the community structure formed by user interests difference,
geographic constraints etc., which make the user graph not well connected)
we can build a hybrid social-affiliation network as follows:
\begin{itemize}
\item\emph{The target graph} is formed by Mtime users and their following
relationships.
\item\emph{The auxiliary graph} is formed by actors and their cooperation
relationships.
\item\emph{The affiliation graph} is formed by Mtime users and actors and
the fan relationships between them.
\end{itemize}
Other than the ordinary people, movie actors especially pop stars are more
easily to form connections since they have more chances to join same events
such as Oscar and Cannes.
That is, the auxiliary graph is more likely to be well connected than the
target graph.
We can leverage this feature to design efficient sampling methods to
measure target graph characteristics.

\section{\textbf{Sampling Design on Hybrid Social-Affiliation Networks}}
\label{sec:methods}

In this section, we design three sampling methods for measuring target
graph characteristics on a hybrid social-affiliation network.
The notations that will be used in this section are summarized in
Table~\ref{tab:notations}.

\begin{table}
\centering
\caption{Notations\label{tab:notations}}
\begin{tabular}{c|l}
\hline\hline
$G,G',G_b$  &  target/auxiliary/affiliation graph.\\
$\CU,\CV$  &  sets of nodes. \\
$n,n'$ & size of target/auxiliary graph, i.e., $n\!=\!|\CU|,n'\!=\!|\CV|$.\\
$\CE,\CE',\CE_b$ & sets of edges. \\
$\CS,\CS'$ & sets of node samples in target and auxiliary graphs.\\
$B,B'$ & sampling budgets, i.e., $B=|\CS|,B'=|\CS'|$. \\
$\CV_u,\CU_v$ & neighbors of node $u$ or $v$ in the graph.\\
$d_u,d_v$ & degree of node $u$ (or $v$) in target (or auxiliary) graph.\\
$d_u^{(b)},d_v^{(b)}$ & degree of node $u$ (or $v$) in affiliation graph.\\
\hline
\end{tabular}
\end{table}

%%%%%%%%%%%%%%%%%%%%%%%%%%%%%%%%%%%%%%%%%%%%%%%%%%%%%%%%%%%%%%%%%%%%%%
%%%%%%%%%%%%%%%%%%%%%%%%%%%% \VS{A} %%%%%%%%%%%%%%%%%%%%%%%%%%%%%%%%%%
%%%%%%%%%%%%%%%%%%%%%%%%%%%%%%%%%%%%%%%%%%%%%%%%%%%%%%%%%%%%%%%%%%%%%%
\subsection{Indirectly Sampling Target Graph by Vertex Sampling on 
Auxiliary Graph (\VS{A})}

When random vertex sampling is more easily to be conducted on auxiliary
graph than on target graph such as the case in Example~\ref{exam:weibo}, we
propose a sampling method \VS{A} to randomly sample vertices in auxiliary
graph so as to indirectly sample target graph.
The basic idea of \VS{A} is illustrated in Fig.~\ref{fig:vsa}.

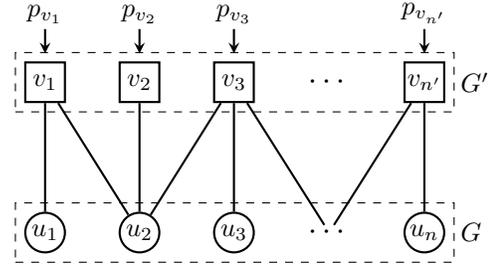
\begin{figure}[htp]
\centering
\begin{tikzpicture}[
und/.style={draw,thick,circle,minimum size=15pt,inner sep=0},
vnd/.style={draw,thick,rectangle,minimum size=15pt,inner sep=0},
att/.style={left,minimum size=0,inner sep=0},
arr/.style={thick,->,>=stealth}]
% user nodes
\node(u1) [und] at (0,0) {$u_1$};
\node(u2) [und,right=.7] at (u1.east) {$u_2$};
\node(u3) [und,right=.7] at (u2.east) {$u_3$};
\node(p1) [att,right=.7] at (u3.east) {\large $\cdots$};
\node(un) [und,right=.7] at (p1.east) {$u_n$};
\node[draw,dashed,fit={(u1) (un)}] {};
\node[att,right=2mm] at (un.east) {$G$};

% venue nodes
\node(v1) [vnd] at (0,2) {$v_1$};
\node(v2) [vnd,right=.7] at (v1.east) {$v_2$};
\node(v3) [vnd,right=.7] at (v2.east) {$v_3$};
\node(p2) [att,right=.7] at (v3.east) {\large $\cdots$};
\node(vn) [vnd,right=.7] at (p2.east) {$v_{n'}$};
\node[draw,dashed,fit={(v1) (vn)}] {};
\node[att,right=2mm] at (vn.east) {$G'$};

% edges between them
\draw[thick] (u2)--(v1)--(u1) (v2)--(u2)--(v3)--(p1)--(vn)--(un) (v3)--(u3);

% sampling venue nodes
\foreach \v/\a in {v1/1,v2/2,v3/3,vn/n'}{
  \node[att,above=0.5 of \v] (tmp) {$p_{v_{\a}}$};
  \draw[arr] ([yshift=-2pt]tmp.south)--([yshift=3pt]\v.north);
}
\end{tikzpicture}

\caption{\VS{A}. Edges in target and auxiliary graphs are omitted.}
\label{fig:vsa}
\end{figure}

Suppose a node $v\in\CV$ is sampled with probability $p_v$ in $G'$.
For example, when graph $G'$ supports the uniform vertex sampling, then
$p_v=1/n', \forall v\in\CV$, where $n'=|\CV|$ is the size of graph $G'$.

\smallskip\noindent\textbf{Sampling Design.}
The sampling design of \VS{A} consists of the following two steps:
\begin{description}[\IEEEsetlabelwidth{\emph{Step (ii)}}]
\item[\emph{Step (i)}] Sampling a collection of $B'$ nodes with replacement
  in auxiliary graph $G'$, and denote these samples by
  $\CS'=\{y_1,\ldots,y_{B'}\}$.
\item[\emph{Step (ii)}] For each $v\in\CS'$, let $\CU_v\subseteq\CU$ be
  the subset of nodes that are connected to $v$ in $G_b$, and nodes in
  $\CU_v$ are all included into $\CS$, i.e., $\CS=\CS\cup\CU_v$.
\end{description}

Having collected samples $\CS$ in the target graph, \VS{A} uses $S$ to
estimate target graph characteristics.

\smallskip\noindent\textbf{Estimators.}
When $n=|\CU|$ is known in advance, we can use the following estimator
to estimate $\theta_l$, 
\begin{equation}
\hat\theta_l^{\VS{A}} = \frac{1}{nB'}\sum_{i=1}^{B'}\frac{1}{p_{y_i}}
\sum_{u\in\CU_{y_i}}\frac{\indr{l\in L(u)}}{d_u^{(b)}},
\label{eq:es1}
\end{equation}
where $d_u^{(b)}$ is the degree of node $u$ in affiliation graph $G_b$.
When $n$ is unknown, we can estimate $n$ by
\begin{equation}
\hat{n} = \frac{1}{B'}\sum_{i=1}^{B'}\frac{1}{p_{y_i}}
\sum_{u\in\CU_{y_i}}\frac{1}{d_u^{(b)}},
\label{eq:nes}
\end{equation}
and another estimator for $\theta_l$ when $n$ is unknown is
\begin{equation}
\check\theta_l^{\VS{A}}=\frac{1}{\hat{n}B'}\sum_{i=1}^{B'}\frac{1}{p_{y_i}}
\sum_{u\in\CU_{y_i}}\frac{\indr{l\in L(u)}}{d_u^{(b)}}.
\label{eq:es2}
\end{equation}

The following theorem guarantees the \emph{unbiasedness} of these estimators.

\begin{theorem}
Estimators~\eqref{eq:es1} and \eqref{eq:nes} are unbiased estimators of
$\theta_l$ and $n$, respectively.
Estimator~\eqref{eq:es2} is an asymptotically unbiased estimator of
$\theta_l$.
\end{theorem}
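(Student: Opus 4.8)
The plan is to establish the two exact-unbiasedness claims by a direct moment computation, and then to obtain the asymptotic claim for the ratio estimator \eqref{eq:es2} by a law-of-large-numbers argument combined with a delta-method expansion. Throughout I assume every target node that can be reached satisfies $d_u^{(b)}\ge 1$; otherwise the weights $1/d_u^{(b)}$ are undefined, but such nodes can never appear in any $\CU_v$ and are simply unsampleable, so they do not enter the sums.

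First I would exploit that $y_1,\ldots,y_{B'}$ are i.i.d.\ with $\Pr(y_i=v)=p_v$, so that $\hat\theta_l^{\VS{A}}=\frac{1}{nB'}\sum_{i=1}^{B'}X_i$ with i.i.d.\ summands $X_i=\frac{1}{p_{y_i}}\sum_{u\in\CU_{y_i}}\indr{l\in L(u)}/d_u^{(b)}$, reducing everything to $\E{X_1}$. The single clever step is to compute this expectation and recognise the resulting double sum as a sum over the edges of the affiliation graph:
\[
\E{X_1}=\sum_{v\in\CV}p_v\cdot\frac{1}{p_v}\sum_{u\in\CU_v}\frac{\indr{l\in L(u)}}{d_u^{(b)}}=\sum_{(u,v)\in\CE_b}\frac{\indr{l\in L(u)}}{d_u^{(b)}}.
\]
The sampling probabilities $p_v$ cancel, which is precisely the purpose of the importance weight $1/p_{y_i}$. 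Swapping the order of summation and using that node $u$ participates in exactly $d_u^{(b)}$ edges of $G_b$ collapses the weight, giving $\sum_{u\in\CU}d_u^{(b)}\cdot\indr{l\in L(u)}/d_u^{(b)}=\sum_{u\in\CU}\indr{l\in L(u)}=n\theta_l$, hence $\E{\hat\theta_l^{\VS{A}}}=\theta_l$. Repeating the identical computation with $\indr{l\in L(u)}$ replaced by $1$ yields $\E{\hat n}=\sum_{u\in\CU}1=n$, proving \eqref{eq:nes} unbiased.

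For \eqref{eq:es2} I would write $\check\theta_l^{\VS{A}}=A_{B'}/\hat n$, where the numerator is $A_{B'}=\frac{1}{B'}\sum_i X_i$ and the denominator is $\hat n=\frac{1}{B'}\sum_i Y_i$ with $Y_i=\frac{1}{p_{y_i}}\sum_{u\in\CU_{y_i}}1/d_u^{(b)}$. By the strong law, $A_{B'}\to n\theta_l$ and $\hat n\to n$ almost surely, so the continuous mapping theorem gives $\check\theta_l^{\VS{A}}\to n\theta_l/n=\theta_l$, i.e.\ consistency. Asymptotic unbiasedness then follows from a first-order Taylor expansion of $a/c$ about $(n\theta_l,n)$: since $\mathrm{Var}(A_{B'})$ and $\mathrm{Var}(\hat n)$ both scale as $1/B'$, the leading term of $\E{\check\theta_l^{\VS{A}}}$ is $\E{A_{B'}}/n=\theta_l$ and the correction is $O(1/B')\to 0$.

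I expect the main obstacle to be the third part rather than the first two. The unbiasedness computations are a mechanical edge-counting argument once the importance weights are seen to cancel $p_v$; the genuine subtlety is confined to the ratio estimator, where $\E{A_{B'}/\hat n}\ne\E{A_{B'}}/\E{\hat n}$ in general, so one must route through consistency and control the $O(1/B')$ bias rather than simply take the expectation term by term.
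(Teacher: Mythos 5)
Your computations for the two exact-unbiasedness claims coincide with the paper's: both cancel the importance weight $1/p_{y_i}$ against the sampling probability $p_v$, reinterpret the resulting double sum as a sum over the edges of $G_b$, and use the fact that each $u$ contributes exactly $d_u^{(b)}$ times with weight $1/d_u^{(b)}$. Where you genuinely diverge is the third claim. The paper disposes of it in one line by invoking the ratio form of the law of large numbers (Meyn and Tweedie, Theorem~17.2.1) and writing $\lim_{B'\to\infty}\check\theta_l^{\VS{A}}=\E{n\hat\theta_l^{\VS{A}}}/\E{\hat{n}}=\theta_l$ --- a statement that, read literally, establishes almost-sure consistency rather than convergence of the expectation. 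You instead separate the two notions: the strong law plus the continuous mapping theorem for consistency, then a delta-method expansion to show $\E{\check\theta_l^{\VS{A}}}=\theta_l+O(1/B')$. Your route is more elementary (the $y_i$ here are i.i.d., so the Markov-chain machinery the paper cites is not needed) and more honest about the gap between $\E{A_{B'}/\hat{n}}$ and $\E{A_{B'}}/\E{\hat{n}}$, which the paper's one-liner glosses over; to be fully rigorous it needs only the mild extra observations that the summands are bounded (finite graph, $p_v>0$), so all moments exist and the Taylor remainder is controlled, and that the event $\hat{n}=0$ has probability decaying to zero so the ratio is eventually well defined. What the paper's citation buys in exchange is brevity and uniformity: the same ratio-form LLN is reused verbatim for the later random-walk estimators, where the samples genuinely form a Markov chain and that theorem is the natural tool.
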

\begin{proof}
We show that
\begin{align*}
\E{\hat\theta_l^{\VS{A}}} &= \frac{1}{nB'}\sum_{i=1}^{B'}\E{\frac{1}{p_{y_i}}
\sum_{u\in\CU_{y_i}}\frac{\indr{l\in L(u)}}{d_u^{(b)}}} \\
&= \frac{1}{n}\sum_{v\in\CV}p_v\frac{1}{p_v}\sum_{u\in\CU_v}
\frac{\indr{l\in L(u)}}{d_u^{(b)}} \\
&= \frac{1}{n}\sum_{u\in\CU}\indr{l\in L(u)} \\
&= \theta_l.
\end{align*}
The second equality holds because that $y_i,i=1,\ldots,B'$ are i.i.d random
variables.
The third equality holds because that each item in the inner summation is
added $d_u^{(b)}$ times for each $u\in\CU$. 
Hence, $\hat\theta_l^{\VS{A}}$ is unbiased.

In a similar manner, we can prove that estimator~\eqref{eq:nes} is an
unbiased estimator of $n$, which we omit here.

To prove that estimator~\eqref{eq:es2} is asymptotically unbiased, we use
the ratio form of the law of large numbers in \cite[Theorem~17.2.1 on
P.~428]{Meyn2009}. Hence
\[
\lim_{B'\rightarrow\infty}\check\theta_l^{\VS{A}} 
=\frac{\E{n\hat\theta_l^{\VS{A}}}}{\E{\hat{n}}}
=\theta_l.
\]
\end{proof}

It is important to note that \VS{A} can only sample nodes in $\CU$
satisfying $d_u^{(b)}>0$ in the target graph.
Because a node in $G$ having no connection to nodes in $G'$ can not be
indirectly sampled according to the design of \VS{A}.
In Example~\ref{exam:weibo}, since we are only interested in users who have
check-ins in Weibo, therefore Example~\ref{exam:weibo} satisfies this
condition.

%%%%%%%%%%%%%%%%%%%%%%%%%%%%%%%%%%%%%%%%%%%%%%%%%%%%%%%%%%%%%%%%%%%%%%
%%%%%%%%%%%%%%%%%%%%%%%%%% \RW{T}VS{A} %%%%%%%%%%%%%%%%%%%%%%%%%%%%%%%
%%%%%%%%%%%%%%%%%%%%%%%%%%%%%%%%%%%%%%%%%%%%%%%%%%%%%%%%%%%%%%%%%%%%%%
\subsection{Random Walking on Target Graph with Vertex Sampling on
Auxiliary Graph (\RW{T}\VS{A})}

In some situations, $d_u^{(b)}=0$ for some $u\in\CU$.
For example, some user nodes in Example~\ref{exam:mtime} may not follow any
movie actors at all, and these users cannot be sampled by \VS{A}.
To overcome this problem, we design another sampling method \RW{T}\VS{A}
which combines random walk sampling on target graph and vertex sampling on
auxiliary graph.

The basic idea of \RW{T}\VS{A} is that, we run a simple random walk on the
target graph, and at each step the random walk jumps with a probability
related to the node that it currently resides.
The node to jump to is randomly chosen from neighbors of $v$ in the
affiliation graph and $v$ is randomly sampled in the auxiliary graph.
We can show that this approach is equivalent to the standard
RWwJ~\cite{Avrachenkov2010,Ribeiro2012b} on $G$, and this idea is
illustrated in Fig.~\ref{fig:rwtvsa}.
An additional advantage of running random walk on target graph is that
a random walk can better characterize highly connected nodes than uniform
sampling as random walks are biased to sample high degree nodes in $G$.

\begin{figure}[htp]
\centering
\begin{tikzpicture}[
und/.style={draw,thick,circle,minimum size=15pt,inner sep=0},
vnd/.style={draw,thin,rectangle,minimum size=15pt,inner sep=0},
att/.style={left,minimum size=0,inner sep=0},
arr/.style={thick,->,>=stealth},
jumper/.style={draw,thick,fill=gray,fill opacity=.7,text opacity=1,regular
polygon,regular polygon sides=6,minimum size=16pt,inner sep=0}]
\node(u1) [und] at (0,0) {$u_1$};
\node(u2) [und,right=.7] at (u1.east) {$u_2$};
\node(u3) [und,right=.7] at (u2.east) {$u_3$};
\node(p1) [att,right=.7] at (u3.east) {\large $\cdots$};
\node(un) [und,right=.7] at (p1.east) {$u_n$};
\node(j1) [jumper,above=1 of u3] {$j$};
\node[draw,dashed,fit={(u1) (un)}] {};
\node[att,right=2mm] at (un.east) {$G$};
\node[att,right=2mm] at (j1.east) {Jumper node};

\foreach \u in {1,2,3,n}{
  \draw[thick] (j1) to node[att] {\small $\omega_{u_\u}$} (u\u);
}

% venue nodes
\node(v1) [vnd,above=1.9 of u1] {$v_1$};
\node(v2) [vnd,right=.7] at (v1.east) {$v_2$};
\node(v3) [vnd,right=.7] at (v2.east) {$v_3$};
\node(p2) [att,right=.7] at (v3.east) {\large $\cdots$};
\node(vn) [vnd,right=.7] at (p2.east) {$v_{n'}$};
\node[draw,dashed,fit={(v1) (vn)}] {};
\node[att,right=2mm] at (vn.east) {$G'$};

% sampling venue nodes
\foreach \v/\a in {v1/1,v2/2,v3/3,vn/n'}{
  \node[att,above=0.5 of \v] (tmp) {$p_{v_{\a}}$};
  \draw[arr] ([yshift=-2pt]tmp.south)--([yshift=3pt]\v.north);
}

\end{tikzpicture}

\caption{\RW{T}\VS{A}. Edges in target and auxiliary graphs are omitted.}
\label{fig:rwtvsa}
\end{figure}

As in \VS{A}, we assume a node $v\in\CV$ can be sampled with probability
$p_v$.
In \RW{T}\VS{A}, we virtually connect each node $u\in\CU$ to a \emph{jumper}
node $j$ by edge $(u,j)$, and each edge $(u,j)$ is assigned with a weight
$\omega_u=\alpha q_u$, where $\alpha$ is a constant controlling the
probability of jumping at each step of random walk, and $q_u$ is determined
as follows in order to protect the reversible property of Markov chain.
\begin{equation}
q_u=\sum_{v\in\CV_u}\frac{p_v}{d_v^{(b)}},\ u\in\CU,
\label{eq:qu}
\end{equation}
where $\CV_u$ is the subset of nodes in $\CV$ that are connected to $u$,
and $d_v^{(b)}$ is the degree of node $v$ in affiliation graph $G_b$.
The random walk jumps from a node $u$ to jumper $j$ with probability
\[
p_{uj}=\frac{\omega_u}{d_u+\omega_u},
\]
and moves from the jumper $j$ to $u$ with probability
\[
p_{ju}=\frac{\omega_u}{\sum_{u'}\omega_{u'}}=q_u.
\]
Note that, if $d_u^{(b)}=0$, then $p_{uj}=p_{ju}=0$ according to
Eq.~\eqref{eq:qu}.
So the random walk does not jump from $u$ or to $u$ if $d_u^{b}=0$.

\RW{T}\VS{A} exhibits similar properties as RWwJ.
That is, when $\alpha=0$, \RW{T}\VS{A} becomes a simple random walk on the
target graph.
When $\alpha=\infty$, \RW{T}\VS{A} is equivalent to \VS{A} and it is also
equivalent to random vertex sampling on the target graph with probability
distribution $\{q_u\}_{u\in\CU}$.

When \RW{T}\VS{A} reaches the steady state, each node $u$ is sampled
with probability
\begin{equation}
\pi_u=\frac{d_u+\omega_u}{2|\CE|+\alpha}, \ u\in\CU.
\label{eq:rwtvsa_stable}
\end{equation}

\smallskip\noindent\textbf{Sampling Design.}
Suppose the random walk starts at node $x_1\in\CU$, and at step $i$ the
random walker is at node $x_i$.
We calculate the probability $q_{x_i}$ according to Eq.~\eqref{eq:qu} and
$\omega_{x_i}=\alpha q_{x_i}$.
At step $i$, the walker jumps with probability
$\omega_{x_i}/(d_{x_i}+\omega_{x_i})$; otherwise, the walker moves to a
neighbor $u$ of $x_i$ chosen uniformly at random and set $x_{i+1}=u$.
The jump is conducted as follows:
\begin{description}[\IEEEsetlabelwidth{\emph{Step (ii)}}]
\item[\emph{Step (i)}] We sample a node $v\in\CV$ in the auxiliary graph
with probability $p_v$.
\item[\emph{Step (ii)}] We sample a neighbor $u$ of $v$ uniformly at random
in the affiliation graph, and let $x_{i+1}=u$.
\end{description}

\smallskip\noindent\textbf{Estimator.}
According to the stationary distribution Eq.~\eqref{eq:rwtvsa_stable} of
\RW{T}\VS{A}, we can use the sample path $\{x_i\}_{i=1}^B$ by the random
walk to design a Hansen-Hurwitz estimator of $\{\theta_l\}_{l\in\CL}$ as
follows,
\begin{equation}
\hat\theta_l^{\RW{T}\VS{A}} = \frac{1}{Z}\sum_{i=1}^B\frac{\indr{l\in
L(x_i)}}{d_{x_i}+\omega_{x_i}},
\label{eq:es_rwtvsa}
\end{equation}
where $Z=\sum_{i=1}^B 1/(d_{x_i}+\omega_{x_i})$.

\begin{theorem}\label{th:rwtvsa}
Estimator~\eqref{eq:es_rwtvsa} is an asymptotically unbiased estimator of
$\theta_l$.
\end{theorem}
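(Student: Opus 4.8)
The plan is to recognize the $\RW{T}\VS{A}$ process as a random walk with jumps on the target graph $G$ and then apply the ratio form of the strong law of large numbers for Markov chains from \cite[Theorem~17.2.1]{Meyn2009} that was already used above. First I would argue that the sampled path $\{x_i\}$ is a time-homogeneous, reversible Markov chain on $\CU$ whose stationary distribution is exactly Eq.~\eqref{eq:rwtvsa_stable}. The jumper node $j$ is only a conceptual device: from a state $u$ the walker either moves to a uniformly chosen target-graph neighbor (with probability $d_u/(d_u+\omega_u)$) or jumps, and a jump lands on $u'$ with probability $q_{u'}$, because Steps~(i)--(ii) draw $v\in\CV$ with probability $p_v$ and then a uniform neighbor of $v$ in $G_b$, so the landing probability on $u'$ equals $\sum_{v\in\CV_{u'}}p_v/d_v^{(b)}=q_{u'}$. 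The resulting one-step kernel on $\CU$ is
\[
P(u,u')=\frac{\indr{(u,u')\in\CE}}{d_u+\omega_u}+\frac{\omega_u\,q_{u'}}{d_u+\omega_u}.
\]

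The key preliminary step is to verify reversibility, i.e. the detailed balance equations $\pi_u P(u,u')=\pi_{u'}P(u',u)$ with $\pi_u$ as in Eq.~\eqref{eq:rwtvsa_stable}. Substituting and using $\omega_u=\alpha q_u$, both sides reduce to $(\indr{(u,u')\in\CE}+\alpha q_u q_{u'})/(2|\CE|+\alpha)$, which is symmetric in $u,u'$, so $\pi$ is stationary. Along the way I would record the normalization identity $\sum_{u\in\CU}q_u=\sum_{v\in\CV}p_v=1$, obtained by exchanging the order of summation in Eq.~\eqref{eq:qu} and using $|\CU_v|=d_v^{(b)}$; this guarantees that jumps form a genuine probability distribution and that $\sum_u\omega_u=\alpha$, so $2|\CE|+\alpha$ is indeed the correct normalizer of $\pi$ over $\CU$.

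With reversibility (hence stationarity) and irreducibility of the walk over the connected target graph, I would apply the ratio ergodic theorem separately to the numerator and denominator of Eq.~\eqref{eq:es_rwtvsa}. Writing $f(u)=\indr{l\in L(u)}/(d_u+\omega_u)$ and $g(u)=1/(d_u+\omega_u)$, the theorem gives, almost surely as $B\to\infty$,
\[
\frac{1}{B}\sum_{i=1}^B f(x_i)\to\E[\pi]{f}=\frac{1}{2|\CE|+\alpha}\sum_{u\in\CU}\indr{l\in L(u)}=\frac{n\theta_l}{2|\CE|+\alpha},
\]
and likewise $\frac1B\sum_{i=1}^B g(x_i)\to\E[\pi]{g}=n/(2|\CE|+\alpha)$, since in each case the factor $d_u+\omega_u$ in $\pi_u$ cancels the denominator of $f$ or $g$. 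Taking the ratio, the common normalizer $2|\CE|+\alpha$ and the factor $n$ cancel, leaving $\hat\theta_l^{\RW{T}\VS{A}}\to\theta_l$; hence the estimator is strongly consistent and in particular asymptotically unbiased.

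Finally, the main obstacle is not the ergodic-average bookkeeping but justifying the first step: confirming that the implemented jump mechanism realizes exactly the reversible kernel above with the prescribed $q_u$, and handling the edge case $d_u^{(b)}=0$ (so $\omega_u=q_u=0$), where such nodes never jump and are never jump targets yet must stay reachable by ordinary target-graph moves. I would therefore make explicit the standing assumption that $G$ is connected, which secures irreducibility over all of $\CU$ and validates the application of the ratio ergodic theorem.
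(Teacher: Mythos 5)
Your proof is correct and follows essentially the same route as the paper: compute the stationary expectations of the numerator and denominator of Eq.~\eqref{eq:es_rwtvsa} (both proportional to $1/(2|\CE|+\alpha)$, with the $d_u+\omega_u$ factors cancelling) and invoke the ratio form of the law of large numbers from \cite[Theorem~17.2.1]{Meyn2009}. The only difference is that you additionally verify the stationary distribution Eq.~\eqref{eq:rwtvsa_stable} via detailed balance and the normalization $\sum_u q_u=1$, which the paper simply asserts; this is a welcome strengthening rather than a different argument.
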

\begin{proof}
Let $D\triangleq \sum_{i=1}^B\indr{l\in
L(x_i)}/(d_{x_i}+\omega_{x_i})$.
Then
\begin{align*}
\E{D} 
&= B\E{\frac{\indr{l\in L(x_i)}}{d_{x_i}+\omega_{x_i}}} \\
&= B\sum_{u\in\CU}\pi_u\frac{\indr{l\in L(u)}}{d_u+\omega_u} \\
&= \frac{B}{2|\CE|+\alpha}\sum_{u\in\CU}\indr{l\in L(u)} \\
&= \frac{Bn}{2|\CE|+\alpha}\theta_l.
\end{align*}
Similarly, we can show that
\begin{align*}
\E{Z} 
&= \E{\sum_{i=1}^B \frac{1}{d_{x_i}+\omega_{x_i}}} \\
&= B\E{\frac{1}{d_{x_i}+\omega_{x_i}}} \\
&= B\sum_{u\in\CU}\pi_u\frac{1}{d_u+\omega_u} \\
&= \frac{Bn}{2|\CE|+\alpha}.
\end{align*}
Now, we invoke Theorem~17.2.1 in \cite[P.~428]{Meyn2009}, which is the
ratio form of the law of large numbers, and indicate that
\[
\lim_{B\rightarrow\infty}\hat\theta_l^{\RW{T}\VS{A}}
=\frac{\E{D}}{\E{Z}}
=\theta_l.
\]
\end{proof}

Note that \RW{T}\VS{A} requires that we can conduct vertex sampling on 
auxiliary graph $G'$.
In fact, we can replace vertex sampling by another simple random walk on
auxiliary graph $G'$.
However, this simple random walk may be easily trapped when $G'$ is not
well connected.
In the follows, we design a new method to address this problem.

%%%%%%%%%%%%%%%%%%%%%%%%%%%%%%%%%%%%%%%%%%%%%%%%%%%%%%%%%%%%%%%%%%%%%%
%%%%%%%%%%%%%%%%%%%%%%%%%% \RW{T}RW{A} %%%%%%%%%%%%%%%%%%%%%%%%%%%%%%%
%%%%%%%%%%%%%%%%%%%%%%%%%%%%%%%%%%%%%%%%%%%%%%%%%%%%%%%%%%%%%%%%%%%%%%
\subsection{Random Walking on Target Graph with Random Walking on
Auxiliary Graph (\RW{T}\RW{A})}

When both the target and auxiliary graphs do not support random vertex
sampling, neither \VS{A} nor \RW{T}\VS{A} can be applied under this
situation. 
Therefore, we design the \RW{T}\RW{A} method in this subsection, namely,
random walking on the target graph with random walking on the auxiliary
graph.
\RW{T}\RW{A} consists of two parallel random walks on $G$ and $G'$
respectively.
The two parallel random walks cooperate with each other, and can be
considered as two random walks with jumps, as illustrated in
Fig.~\ref{fig:rwtrwa}.
Nodes in $G$ and $G'$ are virtually connected to two jumper nodes $j_1$ and
$j_2$, respectively.

\begin{figure}[htp]
\centering
\begin{tikzpicture}[
und/.style={draw,thick,circle,minimum size=15pt,inner sep=0},
vnd/.style={draw,thick,rectangle,minimum size=15pt,inner sep=0},
jumper/.style={draw,thick,fill=gray,fill opacity=.7,text opacity=1,regular polygon,regular polygon sides=6,minimum size=16pt,inner sep=0},
att/.style={left,minimum size=0,inner sep=0}]
\node(u1) [und] at (0,0) {$u_1$};
\node(u2) [und,right=.7] at (u1.east) {$u_2$};
\node(u3) [und,right=.7] at (u2.east) {$u_3$};
\node(p1) [att,right=.7] at (u3.east) {\large $\cdots$};
\node(un) [und,right=.7] at (p1.east) {$u_n$};
\node(j1) [jumper,above left=1.08 and 0.1 of u3] {$j_1$};
\node[draw,dashed,fit={(u1) (un)}] {};
\node[att,right=2mm] at (un.east) {$G$};
\foreach \v in {1,2,3,n}{
	\draw[thick] (j1) to node[att] {$\omega_{u_{\v}}$} (u\v);
}
\node(v1) [vnd] at (0,3) {$v_1$};
\node(v2) [vnd,right=.7] at (v1.east) {$v_2$};
\node(v3) [vnd,right=.7] at (v2.east) {$v_3$};
\node(p2) [att,right=.7] at (v3.east) {\large $\cdots$};
\node(vnp) [vnd,right=.7] at (p2.east) {$v_{n'}$};
\node(j2) [jumper,below right=1.08 and 0.1 of v3] {$j_2$};
\node[draw,dashed,fit={(v1) (vnp)}] {};
\node[att,right=2mm] at (vnp.east) {$G'$};
\node[att,right=2mm] at (j2.east) {Jumper nodes};
\foreach \v/\a in {1/1,2/2,3/3,np/n'} {
	\draw[thick] (j2) to node[att] {$w_{v_{\a}}$} (v\v);
}
\end{tikzpicture}

\caption{\RW{T}\RW{A}. Edges in target and auxiliary graphs are omitted.}
\label{fig:rwtrwa}
\end{figure}

The basic idea behind \RW{T}\RW{A} is as follows.
Suppose the two random walks are $RW$ on $G$ and $RW'$ on $G'$
respectively, and they are at $x_i$ and $y_i$ at step $i$. 
If one random walk needs to jump at step $i$, say $RW$, then the node to
jump to is randomly chosen from $y_i$'s neighbors in the affiliation graph
and assigned to $x_{i+1}$.
Similar jumping procedure also applies to $RW'$.
Therefore, they are equivalent to two RWwJs.

The main problem we need to solve is how to determine edge weights
$\omega_\CU\triangleq\{\omega_u\}_{u\in\CU}$ and
$w_\CV\triangleq\{w_v\}_{v\in\CV}$,
which control the probability of jumping on $G$ and $G'$ respectively.
Obviously, the stationary distributions of the two random walks are related
to these weights.
Let $\pi_u$ and $\pi_v$ be the stationary distributions of sampling nodes
in $G$ and $G'$ respectively.
They are determined by
\begin{align}
\pi_u &= \frac{d_u+\omega_u}{2|\CE|+\alpha}, \ u\in\CU,
\label{eq:piu} \\
\pi_v &= \frac{d_v+w_v}{2|\CE'|+\beta}, \ v\in\CV,
\label{eq:piv}
\end{align}
where $\alpha$ and $\beta$ are two constants.
These weights should be assigned properly so as to keep the reversibility
of Markov chains.
Therefore, stationary distributions control $\omega_u$ and $w_v$ in turn.
\begin{align}
\omega_u &= \alpha\sum_{v\in\CV_u}\frac{\pi_v}{d_v^{(b)}}, \ u\in\CU,
\label{eq:wu} \\
w_v &= \beta\sum_{u\in\CU_v}\frac{\pi_u}{d_u^{(b)}}, \ v\in\CV.
\label{eq:wv}
\end{align}
Or we can arrange Eqs.~\eqref{eq:piu}-\eqref{eq:wv} in matrix formulas,
\begin{align*}
\pi_\CU &= \frac{d_\CU+\omega_\CU}{2|\CE|+\alpha}, 
&\pi_\CV &= \frac{d_\CV+w_\CV}{2|\CE'|+\beta}, \\
\omega_\CU &=\alpha AD_\CV^{-1}\pi_\CV, 
&w_\CV &= \beta A^TD_\CU^{-1}\pi_\CU,
\end{align*}
where $A=[a_{uv}]_{n\times n'}$ is the adjacency matrix of $G_b$, 
$\pi_\CU=[\pi_u]_{u\in\CU}^T, \pi_\CV=[\pi_v]_{v\in\CV}^T$,
$\omega_\CU=[\omega_u]_{u\in\CU}^T, w_\CV=[w_v]_{v\in\CV}^T$,
$d_\CU=[d_u]_{u\in\CU}^T, d_\CV=[d_v]_{v\in\CV}^T$ are six vectors,
and $D_\CU=diag(d_{u_1}^{(b)},\ldots,d_{u_n}^{(b)}), 
D_\CV=diag(d_{v_1}^{(b)},\ldots,d_{v_{n'}}^{(b)})$.

Above equations can uniquely determine $\omega_\CU$ and $w_\CV$, i.e.,
\begin{align*}
\omega_\CU &\!=\! c'(I\!-\!cc'AD_\CV^{-1}A^TD_\CU^{-1})^{-1}\!
AD_\CV^{-1}(d_\CV\!+\!cA^TD_\CU^{-1}d_\CU), \\
w_\CV &\!=\! c(I\!-\!cc'A^TD_\CU^{-1}AD_\CV^{-1})^{-1}\!
A^TD_\CU^{-1}(d_\CU\!+\!c'AD_\CV^{-1}d_\CV),
\end{align*}
where $c=\beta/(2|\CE|+\alpha)$ and $c'=\alpha/(2|\CE'|+\beta)$ are two
constants.

Above results illustrate that, given $\alpha$ and $\beta$, $\omega_\CU$
and $w_\CV$ are uniquely determined.
However, they need complete knowledges of $G$, $G'$ and $G_b$
to determine their precise values.
This feature is not suitable for us to design an algorithm that only uses
local information of these graphs.
In what follows, we address this problem and design \RW{T}\RW{A} in a way
that only requires local knowledges of these graphs.

Suppose we firstly fix $\omega_u=\alpha q_u$, e.g., specify $q_u$ to
follow a uniform distribution over $\CU$.
Using above equations, we can determine $\pi_u$, $w_v$ and $\pi_v$ in order.
Then, using Eq.~\eqref{eq:wu}, we can calculate a new $\omega_u'$ which may
not equal to $\omega_u$.
Let $\omega_u'=\alpha q_u'$.
Since $q_u\neq q_u'$, this will cause the Markov chain on $G$ to be
non-reversible.
To address this problem, we apply Metropolis-Hastings sampler~\cite[Chapter
7]{Robert2004} by considering $\{q_u\}_{u\in\CU}$ as the \emph{desired
distribution} and $\{q_u'\}_{u\in\CU}$ as the \emph{proposal distribution}.
Therefore, we can use Metropolis-Hastings sampler to build a Markov chain
(refer as MH chain) that can generate samples with the desired distribution
$\{q_u\}_{u\in\CU}$, and each time when the random walk on $G$ requires to
jump, it jumps to a sample of MH chain, thereby preserving the
reversibility of Markov chain on $G$.

\begin{algorithm}
\caption{Metropolis-Hastings Sampler.\label{alg:MH-sampler}}
\KwIn{A desired distribution $\{q_u\}_{u\in\CU}$, and a proposed
distribution $\{q_u'\}_{u\in\CU}$.}
\KwOut{A Markov chain with the desired stationary distribution.}
Let $u_t$ be the sample at time $t$\;
\While{not stop,}{
	Draw $u$ from $\{q_u'\}$\;
	Calculate the acceptance ratio
	$r_t=\min\{1,\frac{q_uq_{u_t}'}{q_{u_t}q_u'}\}$\;
	Set $u_{t+1}=u$ with probability $r_t$\;
	Set $u_{t+1}=u_t$ with probability $1-r_t$\;
}
\end{algorithm}

\smallskip\noindent\textbf{Sampling Design.}
We specify a desired sampling distribution $\{q_u\}_{u\in\CU}$ over
$\CU$, e.g., a uniform distribution.
The complete sampling design of \RW{T}\RW{A} comprises three Markov chains
as shown in Fig.~\ref{fig:mcs}.

\smallskip\noindent\,\textbullet\emph{Random Walk on Auxiliary Graph:}
Suppose the random walker resides at node $y_i\in\CV$ at step $i$.
Then we can easily calculate $w_{y_i}$ according to Eq.~\eqref{eq:wv}.
At step $i+1$, the walker execute one of the following two steps.
\begin{description}[\IEEEsetlabelwidth{\emph{Jumping}}]
\item[\emph{Jumping}] With probability
$w_{y_i}/(d_{y_i}+w_{y_i})$, the walker jumps to a random
neighbor $v\in\CV$ of node $x_i$ in $G_b$, and set $y_{i+1}=v$; 
\item[\emph{Walking}] Otherwise, the walker moves to a random neighbor
$v\in\CV$ of $y_i$ in $G'$, and set $y_{i+1}=v$.
\end{description}

\smallskip\noindent\,\textbullet\emph{Metropolis-Hastings (MH) Chain:}
Suppose the MH chain resides at node $x'_i$ at step $i$.
At step $i+1$, we randomly choose a neighbor $u\in\CU$ of $y_i$ in $G_b$.
This is equivalent to sample a node $u\in\CU$ with probability $q_u$.
\begin{description}[\IEEEsetlabelwidth{\emph{Accept}}]
\item[\emph{Accept}] With probability $r_i$, we accept $u$ and set
$x'_{i+1}=u$, where $r_i=\min\{1,(q_uq_{x'_i}')/(q_{x'_i}q_u')\}$ (note
that $q_{x'_i}'$ has been calculated at step $i$).
\item[\emph{Reject}] Otherwise, we reject $u$ and set $x'_{i=1}=x'_i$.
\end{description}

\smallskip\noindent\,\textbullet\emph{Random Walk on Target Graph:}
Suppose the random walker resides at node $x_i\in\CU$ at step $i$.
Then we can easily calculate $\omega_{x_i}$ according to
Eq.~\eqref{eq:wu}.
At step $i+1$, the walker execute one of the following two steps.
\begin{description}[\IEEEsetlabelwidth{\emph{Jumping}}]
\item[\emph{Jumping}] With probability
$\omega_{x_i}/(d_{x_i}+\omega_{x_i})$, the walker jumps to $x'_{i+1}$, and
set $x_{i+1}=x'_{i+1}$; 
\item[\emph{Walking}] Otherwise, the walker moves to a random neighbor
$u\in\CU$ of $x_i$ in $G$, and set $x_{i+1}=u$.
\end{description}

\begin{figure}
\centering
\footnotesize
\begin{tikzpicture}[
und/.style={draw,thick,circle,minimum size=20pt,inner sep=0},
vnd/.style={draw,thick,rectangle,minimum size=20pt,inner sep=0},
att/.style={left,minimum size=0,inner sep=0}]
\node(u1) [und] at (0,0) {$x_1$};
\node(p1) [att,right=.5 of u1] {\large$\cdots$};
\node(ui) [und,right=.5 of p1] {$x_i$};
\node(ui1) [und,right=.5 of ui] {$x_{i+\!1}$}; 
\node(p2) [att,right=.5 of ui1] {\large$\cdots$};
\draw[thick,->] (u1)--(p1);
\draw[thick,->] (p1)--(ui);
\draw[thick,->] (ui)--(ui1);
\draw[thick,->] (ui1)--(p2);
\node[att,left=.3 of u1] {RW on $G$:}; 

\node(up1) [und,above=.4 of u1] {$x'_1$};
\node(pp1) [att,right=.5 of up1] {\large$\cdots$}; 
\node(upi) [und,right=.5 of pp1] {$x'_i$}; 
\node(upi1) [und,right=.5 of upi] {$x'_{i+\!1}$}; 
\node(pp2) [att,right=.5 of upi1] {\large$\cdots$}; 
\draw[thick,->] (up1)--(pp1);
\draw[thick,->] (pp1)--(upi);
\draw[thick,->] (upi)--(upi1);
\draw[thick,->] (upi1)--(pp2);
\node[att,left=.3 of up1] {MH chain:}; 

\node(v1) [vnd,above=.4 of up1] {$y_1$};
\node(pv1) [att,right=.5 of v1] {\large$\cdots$}; 
\node(vi) [vnd,right=.5 of pv1] {$y_i$}; 
\node(vi1) [vnd,right=.5 of vi] {$y_{i+\!1}$}; 
\node(pv2) [att,right=.5 of vi1] {\large$\cdots$}; 
\draw[thick,->] (v1)--(pv1);
\draw[thick,->] (pv1)--(vi);
\draw[thick,->] (vi)--(vi1);
\draw[thick,->] (vi1)--(pv2);
\node[att,left=.3 of v1] {RW on $G'$:}; 
\end{tikzpicture}

\caption{Markov chains in the \RW{T}\RW{A}.} 
\label{fig:mcs}
\end{figure}

\smallskip\noindent\textbf{Estimator.}
We use the sample path $\{x_i\}_{i=1}^B$ by the random walk on $G$ to
design an estimator to estimate $\{\theta_l\}_{l\in\CL}$ as follows,
\begin{equation}
\hat\theta_l^{\RW{T}\RW{A}} = \frac{1}{Z}\sum_{i=1}^B\frac{\indr{l\in
L(x_i)}}{d_{x_i}+\omega_{x_i}},
\label{eq:rwtrwa}
\end{equation}
where $Z=\sum_{i=1}^B 1/(d_{x_i}+\omega_{x_i})$.

\begin{theorem}
Estimator~\eqref{eq:rwtrwa} is an asymptotically unbiased estimator for
$\theta_l$.
\end{theorem}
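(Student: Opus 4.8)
The plan is to mirror the proof of Theorem~\ref{th:rwtvsa}, since Estimator~\eqref{eq:rwtrwa} is syntactically identical to Estimator~\eqref{eq:es_rwtvsa}: a Hansen--Hurwitz ratio in which both numerator and denominator are reweighted by $1/(d_{x_i}+\omega_{x_i})$. The only new ingredient I need is that the RW on $G$ under \RW{T}\RW{A} has the stationary distribution $\pi_u=(d_u+\omega_u)/(2|\CE|+\alpha)$ asserted in Eq.~\eqref{eq:piu}. Granting this, I would set $D\triangleq\sum_{i=1}^B\indr{l\in L(x_i)}/(d_{x_i}+\omega_{x_i})$ and evaluate $\E{D}$ by averaging the summand against $\pi_u$; the factor $d_u+\omega_u$ in $\pi_u$ cancels the reweighting, leaving $\E{D}=\frac{Bn}{2|\CE|+\alpha}\theta_l$. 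The identical cancellation gives $\E{Z}=Bn/(2|\CE|+\alpha)$, and the ratio form of the law of large numbers~\cite[Theorem~17.2.1 on P.~428]{Meyn2009} then yields $\lim_{B\rightarrow\infty}\hat\theta_l^{\RW{T}\RW{A}}=\E{D}/\E{Z}=\theta_l$.

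The real work, and what I expect to be the main obstacle, is justifying Eq.~\eqref{eq:piu} itself under the true dynamics. Unlike \RW{T}\VS{A}, where a jump draws its target directly with probability $q_u$, here the jump target is produced by the Metropolis--Hastings chain of Algorithm~\ref{alg:MH-sampler}, whose proposals $\{q_u'\}$ arise from picking a neighbor of $y_i$ in $G_b$. I would therefore check that the combined walk-or-jump kernel on $\CU$ is reversible with respect to $\pi_u$: the walking part satisfies detailed balance immediately because $\pi_u\propto d_u+\omega_u$ cancels the walk transition probability $1/(d_u+\omega_u)$, while the jumping part requires that a jump land on $u$ with probability proportional to $\omega_u=\alpha q_u$, making its two-sided rate proportional to the symmetric quantity $\omega_u\omega_{u'}$. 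The crux is that the MH chain, by construction, converts the proposal distribution $\{q_u'\}$ into the desired $\{q_u\}$, so that the jump indeed samples $u$ with probability $q_u$ and the virtual jumper-edge weights of Eqs.~\eqref{eq:piu}--\eqref{eq:wu} are realized.

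Two subtleties accompany this step. First, the MH chain attains its target $\{q_u\}$ only in the limit, so at finite time the jump distribution is not exactly $\{q_u\}$; this is exactly why the theorem claims \emph{asymptotic} unbiasedness and why the ratio-LLN, rather than an exact expectation identity, is the appropriate tool. Second, $\omega_u$ is not free: through Eq.~\eqref{eq:wu} it is coupled to the auxiliary-graph stationary law $\pi_v$, which in turn depends on $w_v$ via Eqs.~\eqref{eq:piv} and \eqref{eq:wv}. I would appeal to the closed-form solution derived in the text, which shows this coupled system has a unique solution for given $\alpha,\beta$, so the $\omega_u$ appearing in $\pi_u$ is well defined. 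With reversibility and this consistency in hand, Eq.~\eqref{eq:piu} holds and the expectation computation closes the argument exactly as in Theorem~\ref{th:rwtvsa}.
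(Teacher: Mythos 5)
Your proposal is correct and follows the same route as the paper: the paper's proof simply observes that the random walk on $G$ in \RW{T}\RW{A} has the same stationary distribution as Eq.~\eqref{eq:rwtvsa_stable} and then refers back to the ratio-LLN argument of Theorem~\ref{th:rwtvsa}, exactly the reduction you carry out. Your additional verification of reversibility of the walk-or-jump kernel, the role of the Metropolis--Hastings chain in realizing the jump distribution $\{q_u\}_{u\in\CU}$, and the well-posedness of the coupled weights $\omega_\CU, w_\CV$ fills in details the paper leaves implicit, and does so correctly.
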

\begin{proof}
First we note that the random walk on target graph $G$ has the same
stationary distribution as Eq.~\eqref{eq:rwtvsa_stable}.
So the remaining proof is similar to the proof of Theorem~\ref{th:rwtvsa}.
\end{proof}

\section{\textbf{Experiments}} \label{sec:experiment}

In this section, we conduct experiments on both synthetic and real-world
datasets to evaluate the effectiveness of proposed methods in previous
section.
We will use degree distribution as the graph characteristic to be measured
in these experiments.
That is, $\theta_l,l\geq 0$ denotes the fraction of nodes with degree $l$
in the target graph $G$.

\begin{figure*}[htp]
\centering
\subfloat[$\check\theta^{\VS{A}}_2$ and $\check\theta^{\VS{A}}_{12}$]{
	\includegraphics[width=.33\linewidth]{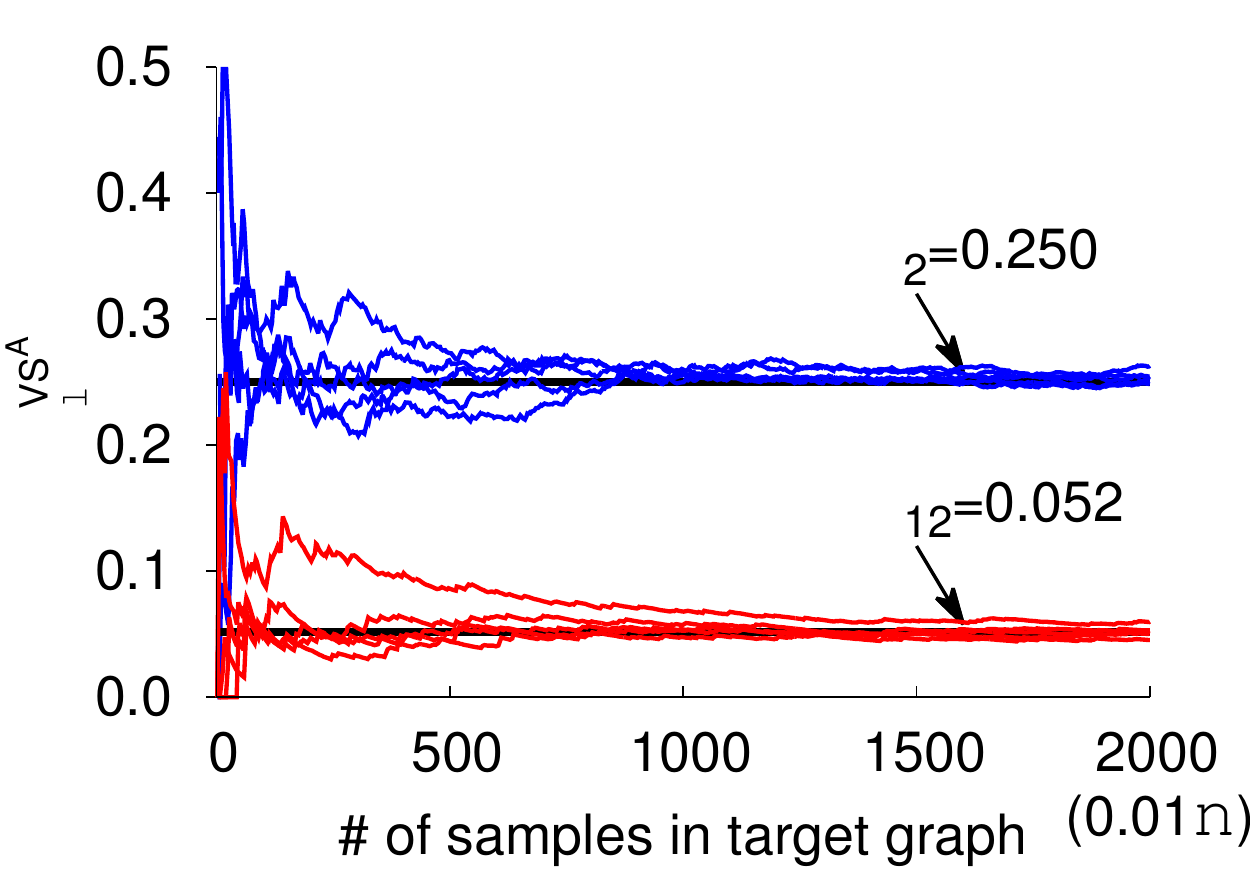}}
\subfloat[$\hat\theta^{\RW{T}\VS{A}}_2$ and
$\hat\theta^{\RW{T}\VS{A}}_{12}$ ($\alpha=10$)]{
	\includegraphics[width=.33\linewidth]{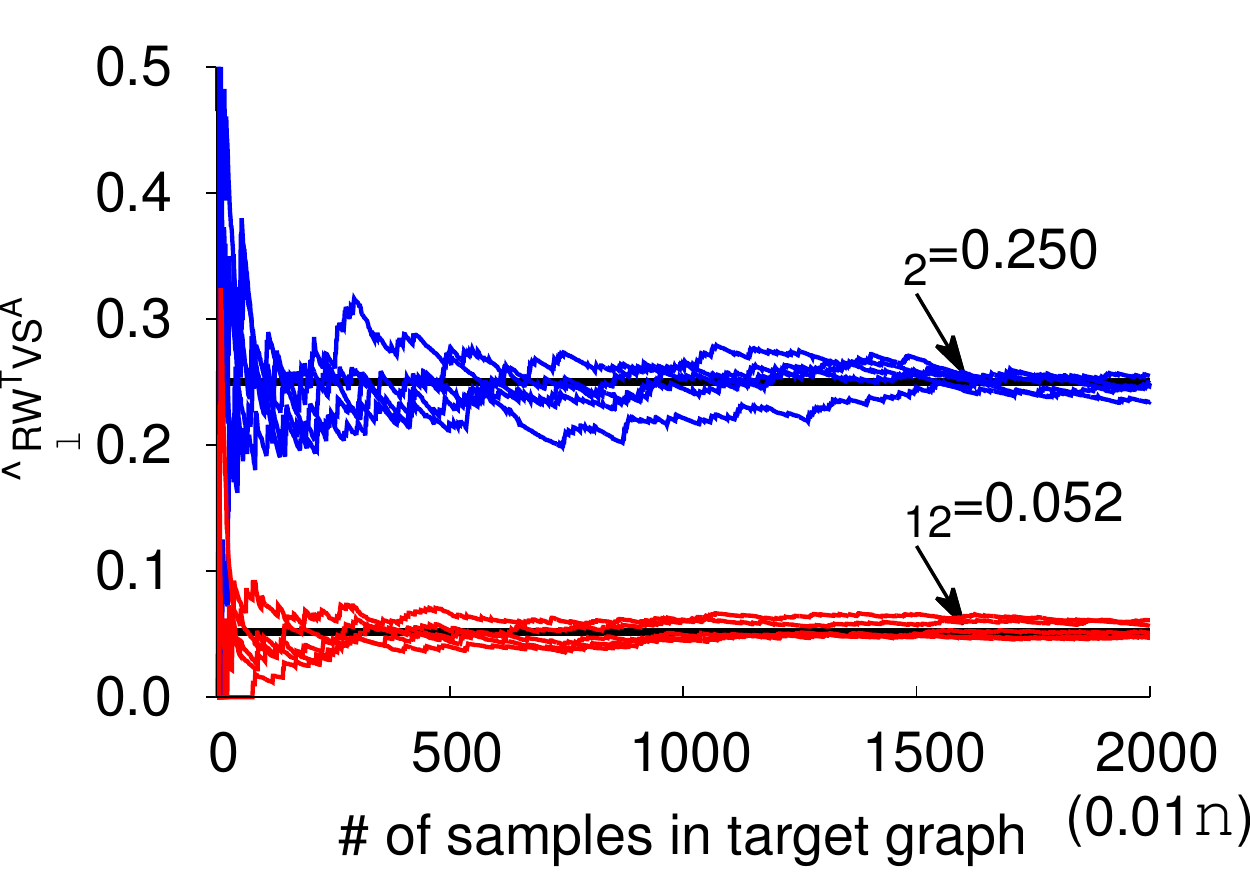}}
\subfloat[$\hat\theta^{\RW{T}\RW{A}}_2$ and
$\hat\theta^{\RW{T}\RW{A}}_{12}$ ($\alpha=\beta=10$)]{
	\includegraphics[width=.33\linewidth]{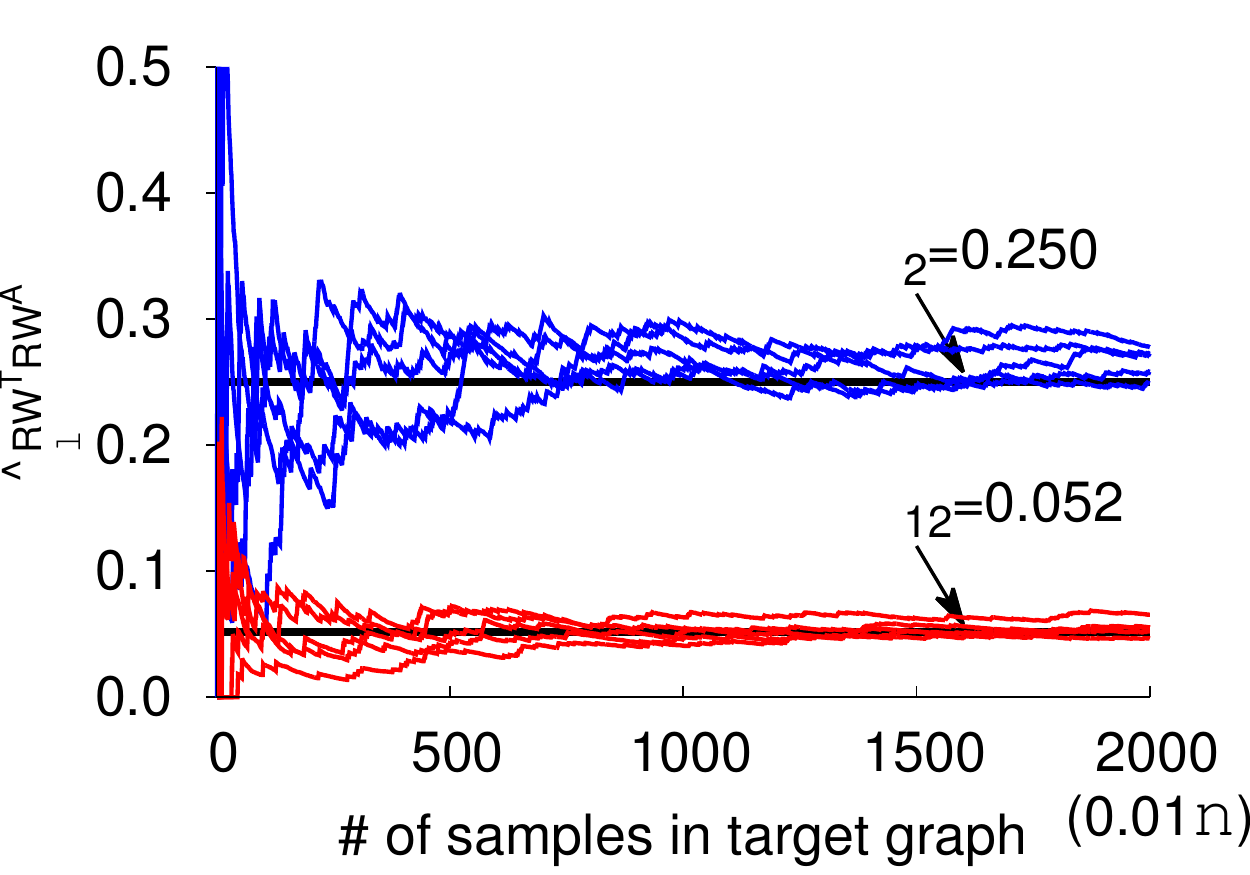}}
\caption{Asymptotic unbiasedness of the estimators ($l=2,12$).}
\label{fig:unbias}
\end{figure*}

\begin{figure*}[htp]
\centering
\subfloat[\VS{A}]{\includegraphics[width=.33\linewidth]{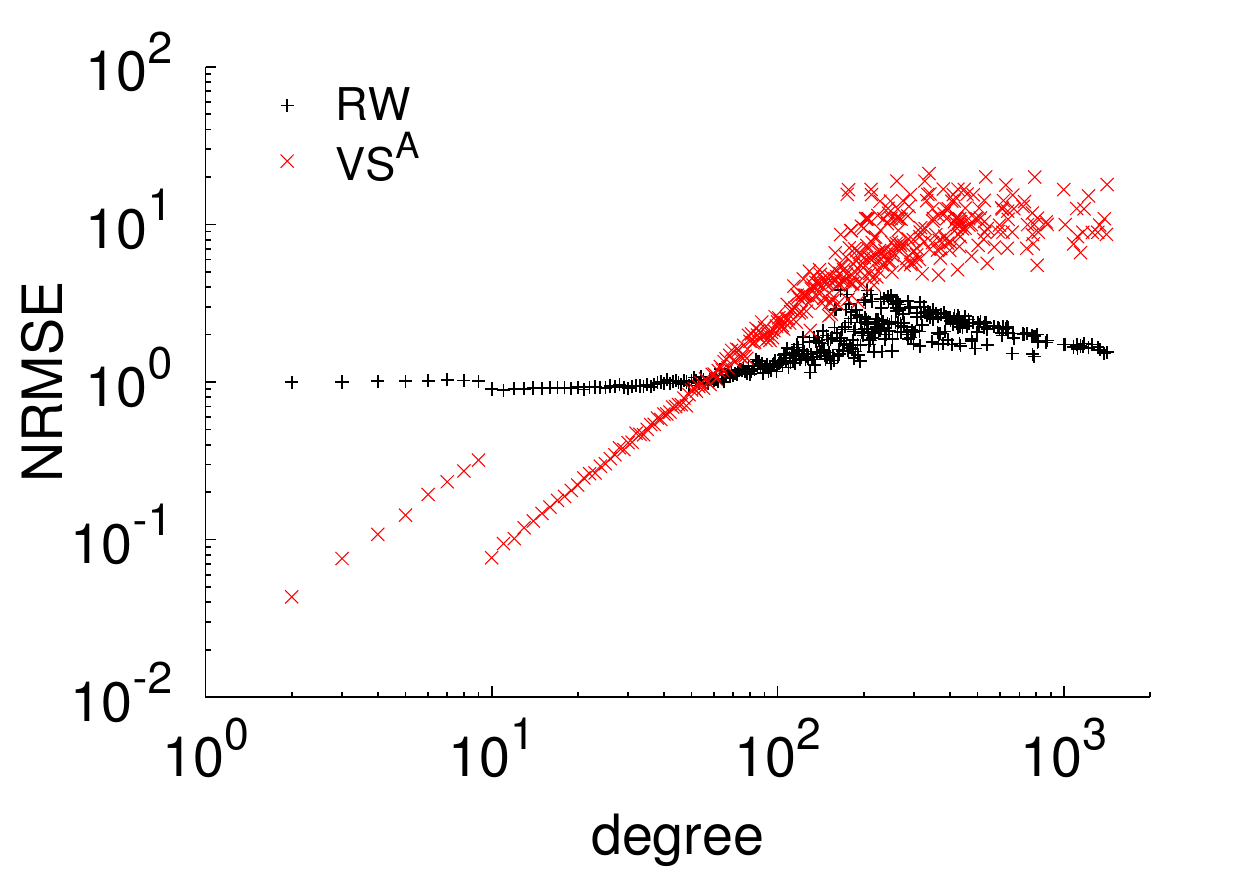}}
\subfloat[\RW{T}\VS{A}\label{fig:nrmse_rwtvsa}]{
	\includegraphics[width=.33\linewidth]{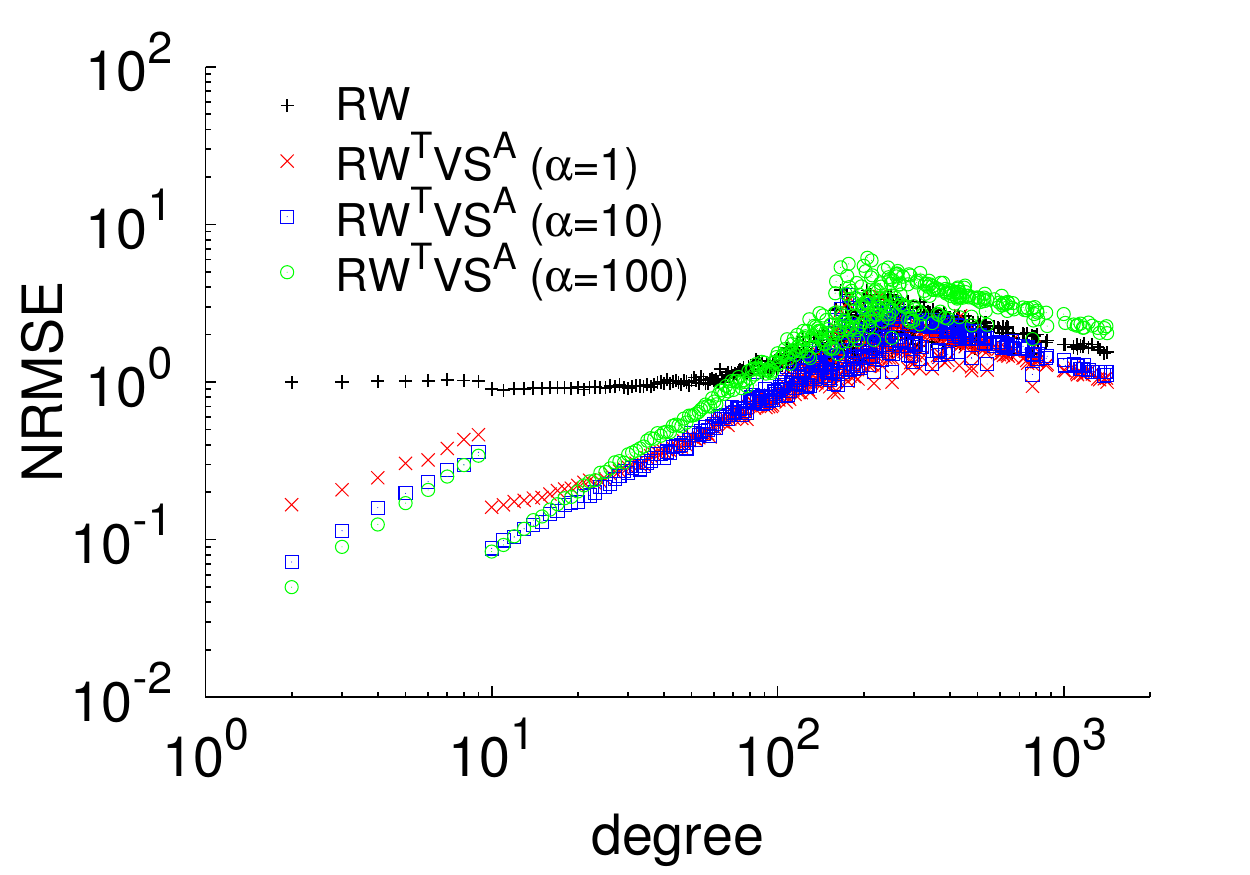}}
\subfloat[\RW{T}\RW{A}\label{fig:nrmse_rwtrwa}]{
	\includegraphics[width=.33\linewidth]{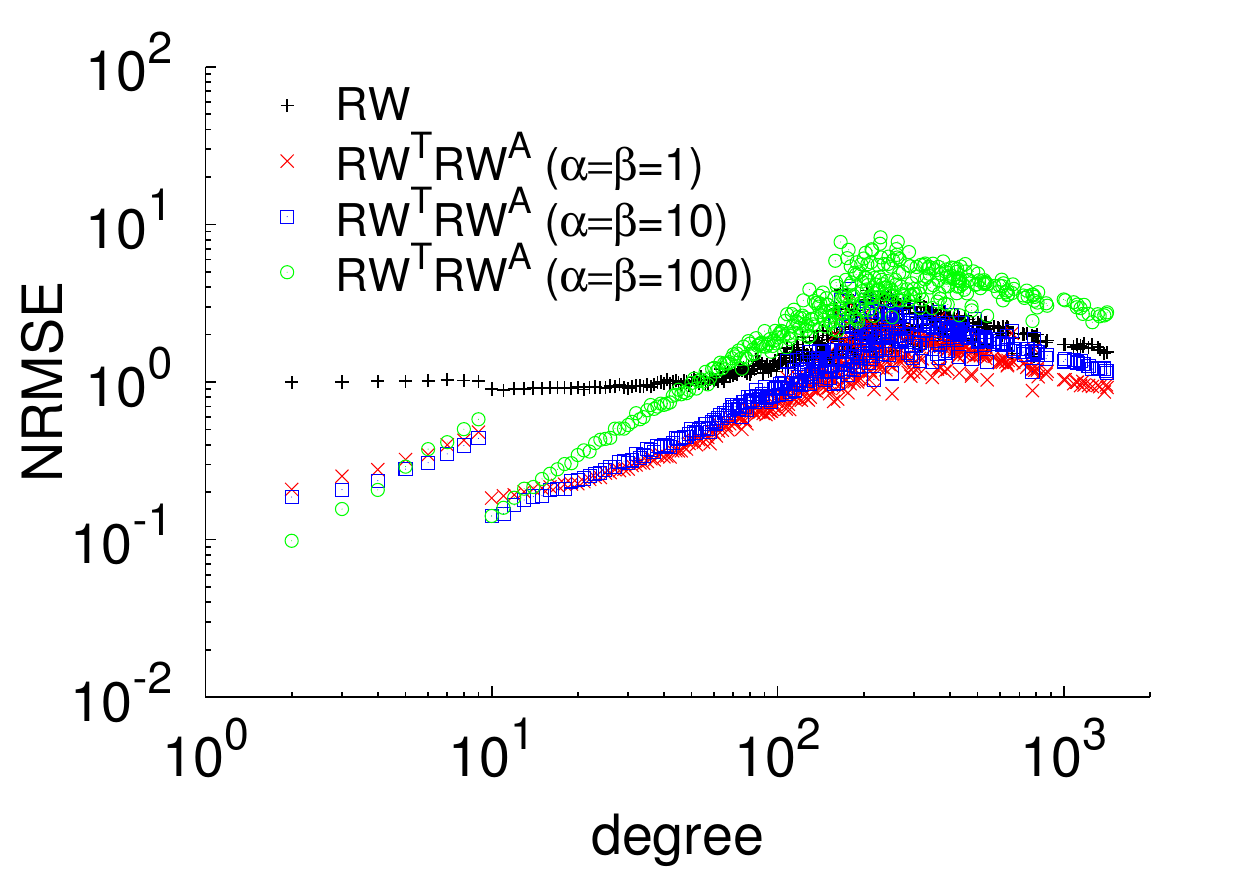}}
\caption{Estimation error comparison (averaged over $1,000$ runs, $2,000$
samples in target graph).}
\label{fig:nrmse}
\end{figure*}

\subsection{Experiments on Synthetic Data}

We first examine the soundness of the proposed sampling methods using
synthetic data.

\smallskip\noindent\textbf{Synthetic Data.}
We generate a hybrid social-affiliation network by connecting three
Barab\'{a}si-Albert graphs~\cite{Barabasi1999}, namely $G_1, G_2$ and $G_3$.
Each BA graph contains $100,000$ nodes but they have different average
degrees, i.e., $4,10$ and $20$ respectively.
$G_1$ and $G_3$ are connected with one edge to form the target graph $G$.
$G_2$ is the auxiliary graph $G'$. The affiliation graph $G_b$ is formed by
following two rules:
\begin{enumerate}
\item For each node $u$ in $G$, we connect $u$ to a randomly selected node
$v$ in $G'$;
\item We randomly choose $200,000$ pairs of nodes in $G$ and $G'$, and
connect them to form edges in $G_b$. 
\end{enumerate}
The first rule makes sure that every node in $\CU$ satisfies $d_u^{(b)}>0$.
Therefore we can apply \VS{A} method on the synthetic graph.

\smallskip\noindent\textbf{Results and Analysis.}
First we demonstrate that the proposed estimators $\check\theta_l^{\VS{A}},
\hat\theta_l^{\RW{T}\VS{A}}$ and $\hat\theta_l^{\RW{T}\RW{A}}$ are
asymptotically unbiased.
To show this, we use different sampling budgets, i.e., number of samples in
the target graph, and compare the estimator values with the ground truth.
The results are depicted in Fig.~\ref{fig:unbias}.
We use these sampling methods to estimate $\theta_2$ and $\theta_{12}$.
As the sampling budget increases, all the estimators converge to the ground
truth values, thereby demonstrating their asymptotic unbiasedness.

To compare the performance of proposed sampling methods with existing
methods, e.g., a simple random walk (RW) on $G$, we use the
\emph{normalized rooted mean squared error} (NRMSE) to evaluate the
estimation error of an estimator, which is defined as follows.
\[
\text{NRMSE}(\hat\theta_l) = 
\frac{\sqrt{\E{(\hat\theta_l-\theta_l)^2}}}{\theta_l}.
\]
The smaller the NRMSE, the better an estimator is.
We fix the sampling budget to be $1\%$ of the nodes in target graph, and
calculate the averaged empirical NRMSE over $1,000$ runs in
Fig.~\ref{fig:nrmse}.

Comparing \VS{A} with RW, we find that \VS{A} can provide smaller NRMSE for
low degree nodes than RW.
However, \VS{A} produces larger NRMSE for high degree nodes than RW.
Therefore, \VS{A} can better estimate low degree nodes in a graph
but not high degree nodes than the RW estimator.

The weakness of \VS{A} can be overcome by \RW{T}\VS{A} and \RW{T}\RW{A}.
From Figs.~\ref{fig:nrmse_rwtvsa} and~\ref{fig:nrmse_rwtrwa} we can see
that, when we allow jumps in \RW{T}\VS{A} and \RW{T}\RW{A} by setting
$\alpha=\beta=1$, the NRMSE for high degree nodes decreases as
small as RW, and NRMSE for low degree nodes keeps smaller than RW.
If we increase the probability of jumping at each step by increasing
$\alpha$ and $\beta$, we observe that the NRMSE for low degree nodes
decreases, and NRMSE for high degree nodes increases. 
This behavior is similar to RWwJ~\cite{Avrachenkov2010,Ribeiro2012b}
because \RW{T}\VS{A} and \RW{T}\RW{A} are equivalent to RWwJ in the design.

\subsection{Experiments on LBSN Datasets}

Now we conduct experiments on two real-world location-based social network
(LBSN) datasets to solve the problem mentioned in Example~\ref{exam:weibo}.

\smallskip\noindent\textbf{LBSN Datasets.}
We use two public datasets Brightkite and Gowalla~\cite{Cho2011} to solve 
our first problem in Example~\ref{exam:weibo}.
Brightkite and Gowalla are once two popular LBSNs where users shared their
locations by checking-in.
Users are also connected by undirected friendship relationships.
The statistics of the two datasets are summarized in Table~\ref{tab:lbsn}.

\begin{table}[htp]
\centering
\caption{Summary of LBSN datasets. \label{tab:lbsn}}
\begin{threeparttable}[b]
\begin{tabular}{|c|l|r|r|}
\hline
\multicolumn{2}{|c|}{dataset} & Brightkite & Gowalla \\
\hline
\hline
\multirow{5}{*}{$G$}
 & network type & undirected & undirected \\
 & \# of users & $58,228$ & $196,591$ \\
 & \# of friendship edges & $214,078$ & $950,327$ \\
 & \# of users in LCC\tnote{1} & $56,739$ & $196,591$ \\
 & \# of edges in LCC & $212,945$ & $950,327$ \\
\hline
\multirow{3}{*}{$G'$ and $G_b$} 
 & \# of distinct venues & $772,966$ & $1,280,969$ \\
 & \# of users having check-ins & $51,406$ & $107,092$ \\
 & \# of check-ins & $4,491,143$ & $6,442,890$ \\
\hline
\multirow{3}{*}{\minitab{$G'$ and $G_b$ \\ for NYC}} 
 & \# of venues in NYC\tnote{2} & $23,484$ & $26,448$ \\
 & \# of users checking in NYC & $4,257$ & $7,399$ \\
 & \# of check-ins in NYC & $33,656$ & $113,423$ \\
\hline
\end{tabular}
\begin{tablenotes}
\item[1]The largest connected component.
\item[2]The New York City (Fig.~\ref{fig:nyc}).
\end{tablenotes}
\end{threeparttable}
\end{table}

\smallskip\noindent\textbf{Venue Sampling.} 
Suppose the surveyor wants to measure characteristics of users located
in New York City (NYC, latitude $40.4^\circ\sim 41.4^\circ$, longitude
$-74.3^\circ\sim -73.3^\circ$, see Fig.~\ref{fig:nyc}), i.e., the degree
distribution of users who checked in NYC.
As we explained in Introduction, directly sampling users is not a good
idea.
Here, we apply the \VS{A} method along with a venue sampling method Random
Region Zoom-In (RRZI)~\cite{Wang2014} to sample users in NYC more
efficiently.

\begin{figure}[htp]
\centering
\includegraphics[width=.8\linewidth]{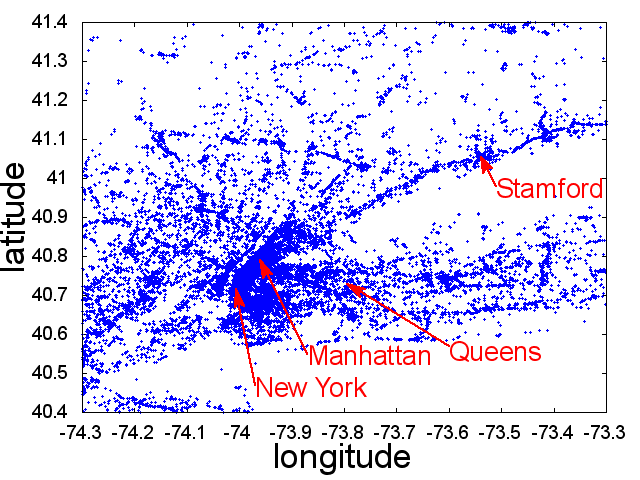}
\caption{Venue distribution in New York City}
\label{fig:nyc}
\end{figure}

RRZI utilizes the venue query API provided by most LBSNs.
A user first specifies a region by giving the bottom-left and top-right
corner latitude-longitude coordinates, and the API returns a set of venues
in this region.
Usually, the size of the set returned is limited to at most $K$ if there
are more than $K$ venues in the region.
RRZI regularly zooms in the region until the subregion is fully accessible,
i.e., the API returns less than $K$ venues in the subregion.
Therefore, RRZI can provide samples of venues in a region of interest.

\smallskip\noindent\textbf{Results.} 
Combining the RRZI and \VS{A} methods, we conduct two experiments to
indirectly sample users in NYC on Brightkite and Gowalla respectively.
We sample $1\%$ of venues in NYC and calculate the degree distribution of
users in NYC.
The results are depicted in Figs.~\ref{fig:bri_rrzi_vsa}
and~\ref{fig:gow_rrzi_vsa}.

\begin{figure}[htp]
\centering
\subfloat[RRZI-\VS{A} estimates\label{fig:bri_est}]{
	\includegraphics[width=.5\linewidth]{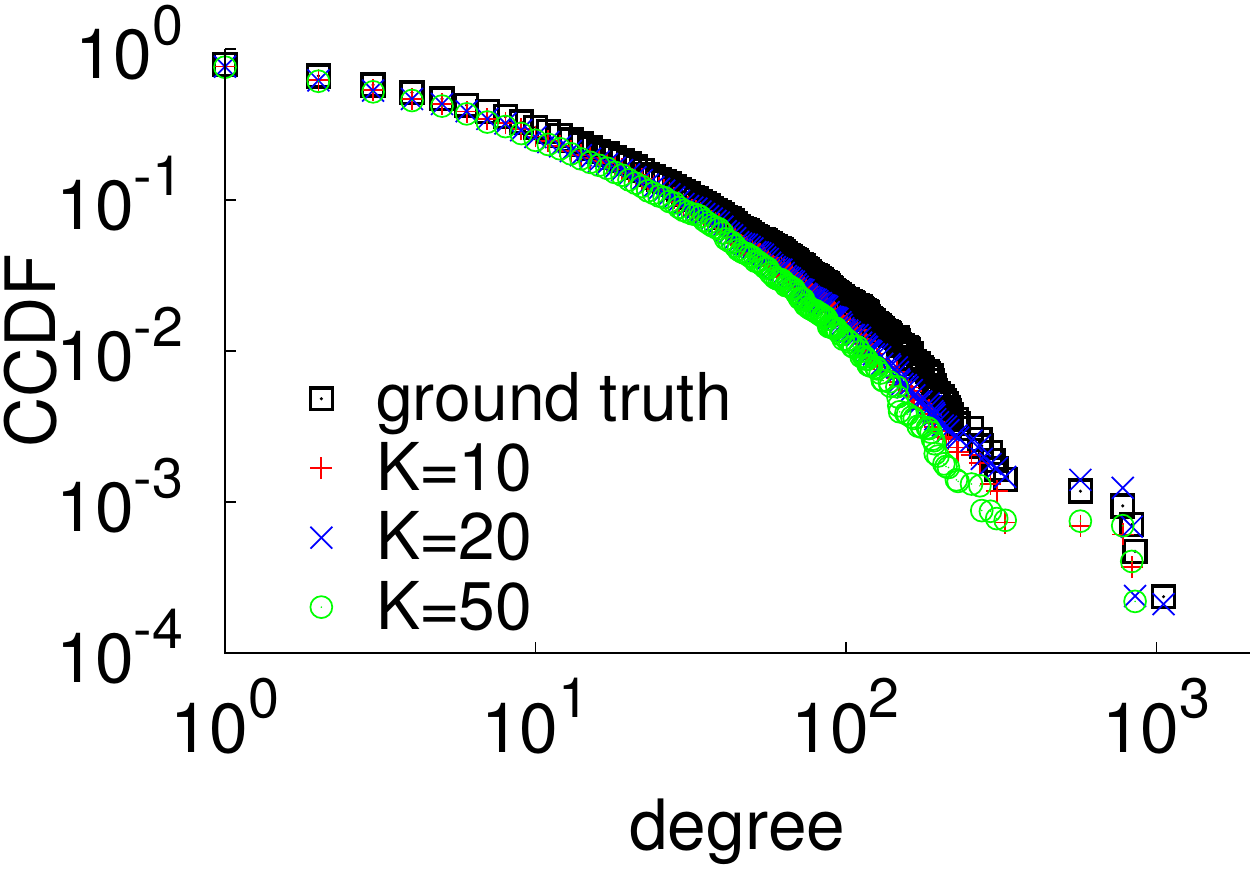}}
\subfloat[RRZI-\VS{A} NRMSE\label{fig:bri_nrmse}]{
	\includegraphics[width=.5\linewidth]{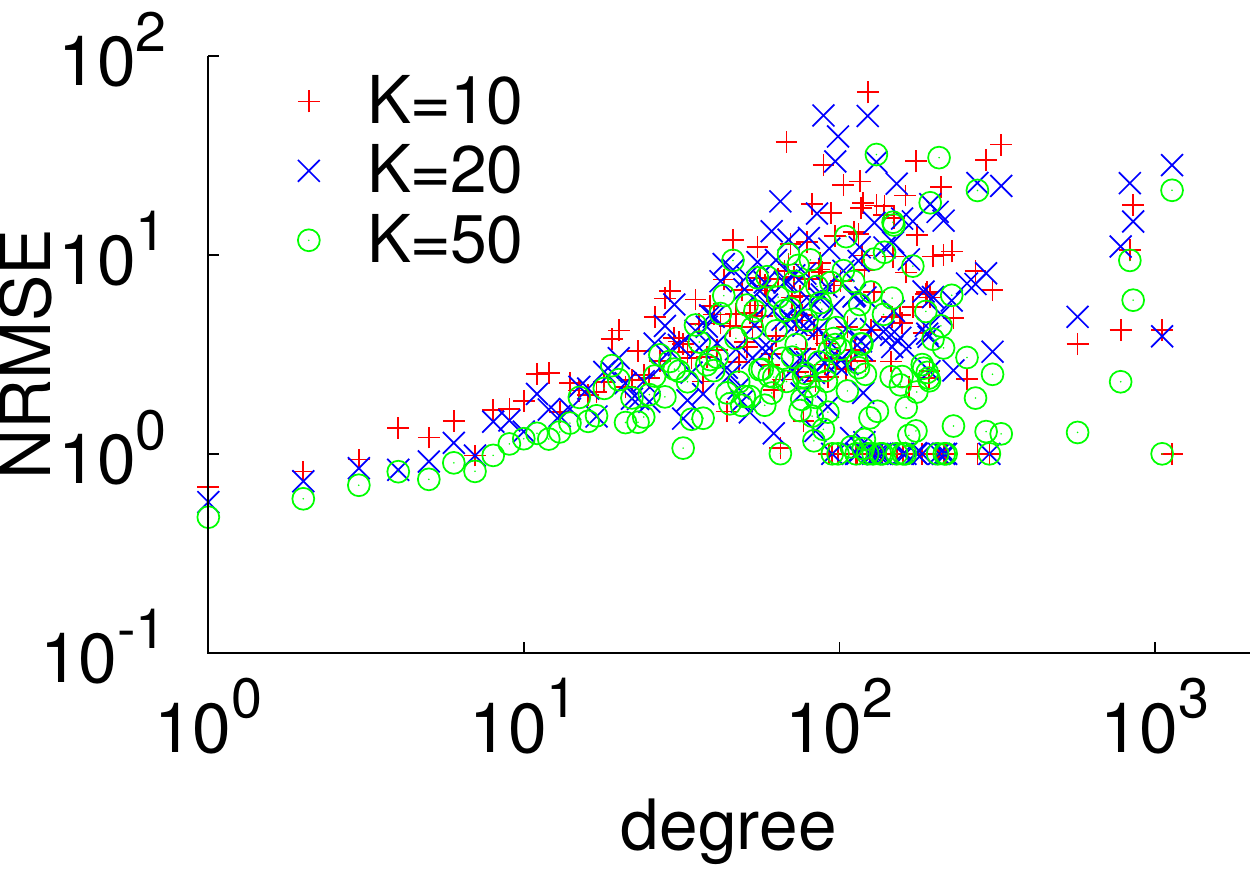}}
\caption{User characteristics in NYC on Brightkite. (averaged over $1000$
runs)}
\label{fig:bri_rrzi_vsa}
\end{figure}

\begin{figure}[htp]
\centering
\subfloat[RRZI-\VS{A} estimates\label{fig:gow_est}]{
	\includegraphics[width=.5\linewidth]{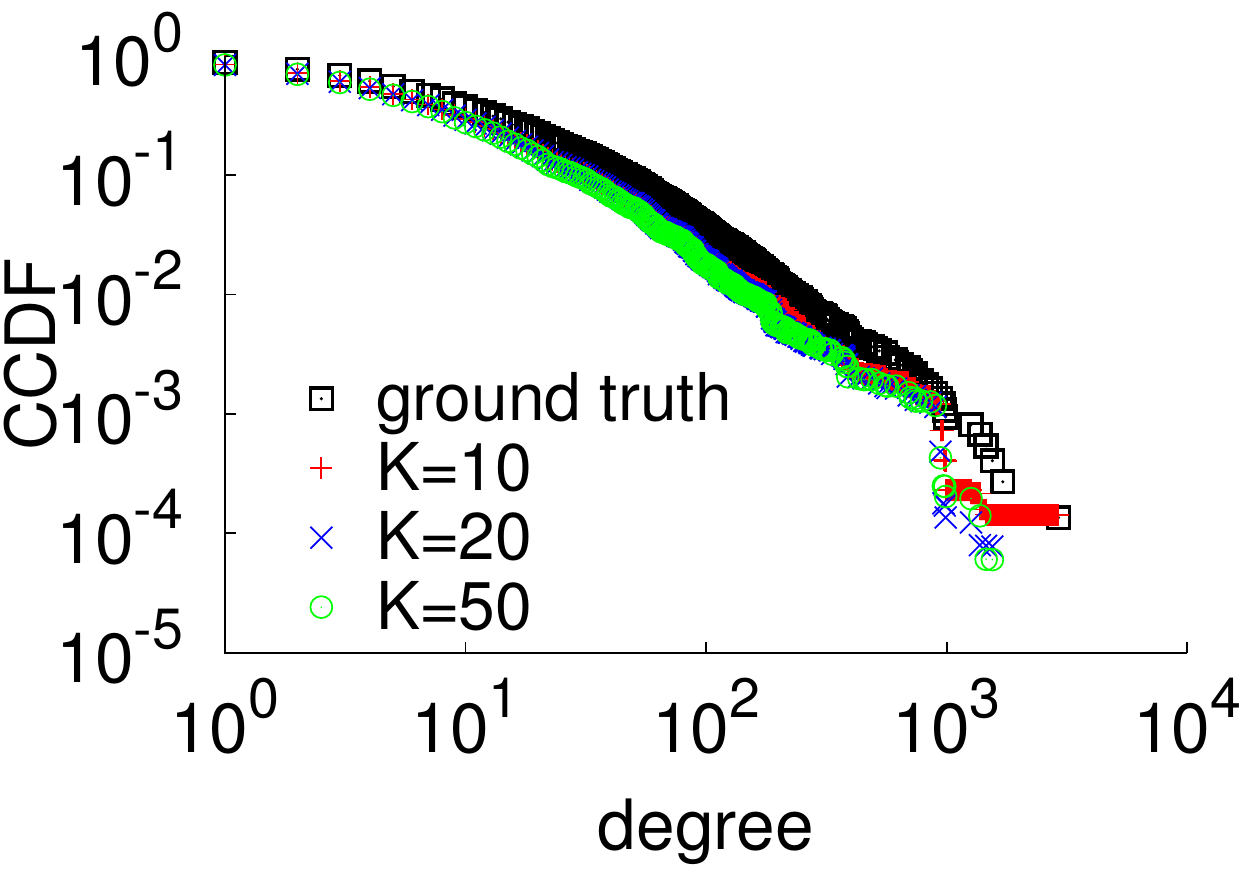}}
\subfloat[RRZI-\VS{A} NRMSE\label{fig:gow_nrmse}]{
	\includegraphics[width=.5\linewidth]{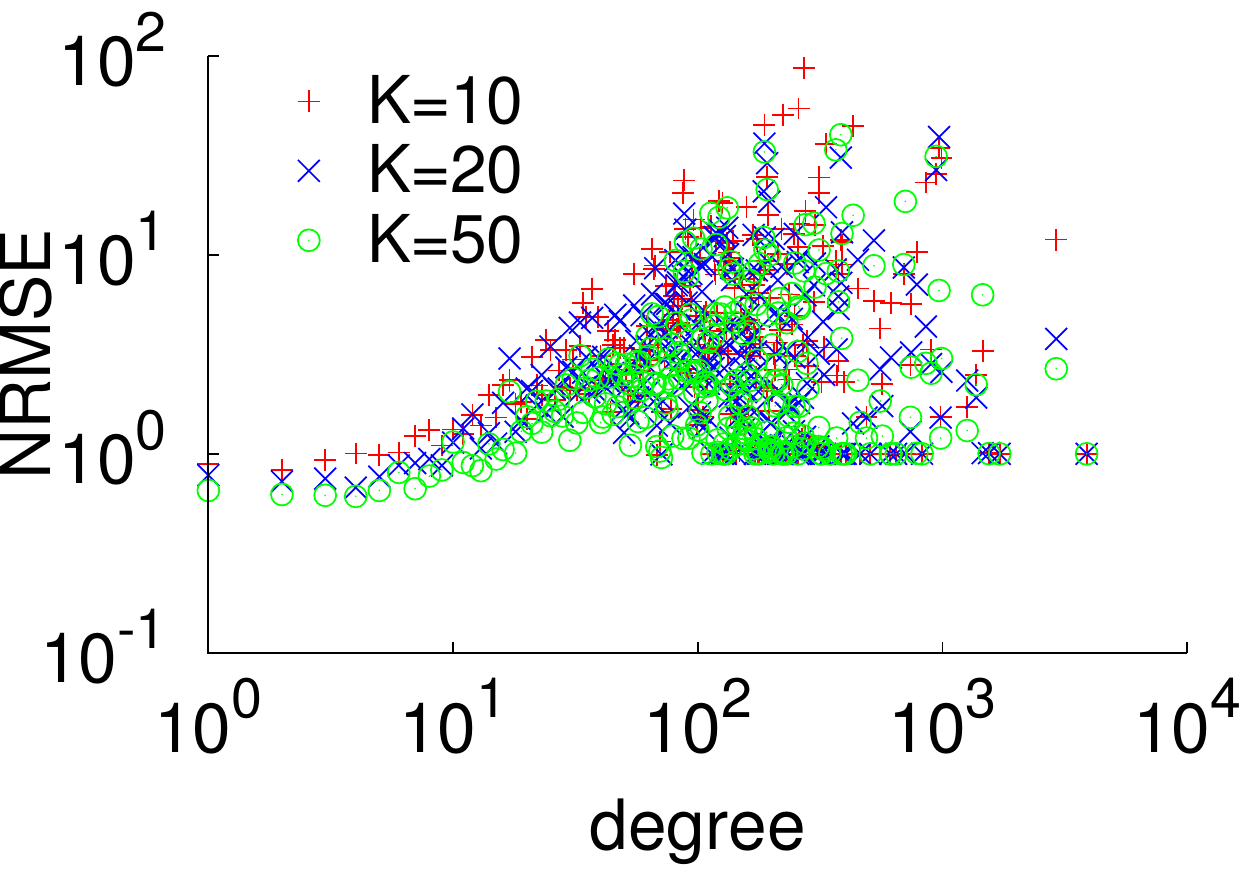}}
\caption{User characteristics in NYC on Gowalla. (averaged over $1000$
runs)}
\label{fig:gow_rrzi_vsa}
\end{figure}

From Figs.~\ref{fig:bri_est} and~\ref{fig:gow_est}, we can see that
RRZI-\VS{A} method can provide well estimates of users in NYC,
and the estimates for low degree users are better than high degree users,
which are clearer from Figs.~\ref{fig:bri_nrmse} and~\ref{fig:gow_nrmse}.
These results are consistent with our previous analysis on synthetic data.
In fact, we can combine \VS{A} with other venue sampling
methods~\cite{Li2012,Li2014,Wang2014} to provide better estimates than RRZI.
However, we omit them due to space limitation.

\subsection{Experiments on Mtime Dataset}
Next, we conduct experiments on Mtime to measure Mtime user characteristics
in Example~\ref{exam:mtime}.

\smallskip\noindent\textbf{Mtime Dataset.}
As we have introduced in Example~\ref{exam:mtime}, users and actors in 
Mtime naturally form a hybrid social-affiliation network.
To build a ground-truth dataset, we almost completely collected the Mtime
users and actors data by traversing user IDs ranging from $100000$ to
$10000000$, and actor IDs ranging from $892000$ to $2100000$.\footnote{
Moreover, Mtime does not restrict HTTP request frequency from third parties.
So we can finish the data collecting within one week using eight machines.
}

For each Mtime user, we collect the set of users he following and users
following him. 
This builds up a directed follower network among users in Mtime.
We also collect the profile information for each user, including gender,
location, tags, groups and so on.
Moreover, each user maintains a list including actors that interest him.
This forms a fan-relationship between users and actors.
For each Mtime actor, we collect the movies he/she participated in.
This can build up a cooperative network among actors, e.g., if two actors
participated in a same movie, they have a cooperative relationship between
them.
The complete Mtime dataset is summarized in Table~\ref{tab:mtime}.

\begin{table}[htbp]
\centering
\caption{Summary of Mtime dataset. \label{tab:mtime}}
\begin{threeparttable}[b]
\begin{tabular}{|c|l|r|}
\hline
\multirow{6}{*}{$G$}
 & user follower network type & directed \\
 & total users (isolated or non-isolated)\tnote{3} & $1,878,127$ \\
 & \# of non-isolated users in follower network & $1,035,164$ \\
 & \# of following relationships & $14,861,383$ \\
 & \# of users in LCC & $987,055$ \\
 & \# of following relationships in LCC & $14,791,482$ \\
\hline
\multirow{6}{*}{$G'$}
 & actor cooperative network type & undirected \\
 & total actors (isolated or non-isolated) & $1,123,340$ \\
 & \# of non-isolated actors in cooperative network & $1,122,166$ \\
 & \# of cooperative relationships & $10,344,364$ \\
 & \# of actors in LCC & $1,114,065$ \\
 & \# of cooperative relationships in LCC & $10,328,904$ \\
\hline
\multirow{7}{*}{$G_b$}
 & \# of fan relationships & $225,558,343$ \\
 & \# of users following actors & $1,419,339$ \\
 & \# of isolated users following actors & $842,963$ \\
 & \# of actors having fans & $441,413$ \\
 & \# of isolated actors having fans & $1,174$ \\
 & \# of isolated actors having only isolated fans & $225$ \\
 & \# of isolated users following only isolated actors & $393$ \\
\hline
\end{tabular}
\begin{tablenotes}
\item[3] An isolated node in a graph is a node with zero degree.
\end{tablenotes}
\end{threeparttable}
\end{table}

\smallskip\noindent\textbf{Analysis of the Dataset.}
First, we provide some basic analysis of the Mtime dataset.
In Table~\ref{tab:mtime}, we compare the first and second blocks, which are
related to the target graph $G$ and auxiliary graph $G'$ respectively.
We find that about $14\%$ of the user IDs and $93\%$ of the actor IDs are
valid.
This indicates that vertex sampling in auxiliary graph is more efficient
than in target graph.
Moreover, we can find that about $45\%$ of users are isolated, i.e., having
zero degree, but the same number for actors is less than $0.1\%$.
This indicates that the auxiliary graph is better connected than the
target graph.
Although a large fraction of users are isolated nodes in the target graph,
from the last block in Table~\ref{tab:mtime} (regarding the affiliation
graph $G_b$), we find that almost all the isolated users are connected to
non-isolated actors (except a few hundreds of them).
So the majority of isolated users are indirectly connected to other users
through actors.
This is illustrated in Fig.~\ref{fig:hsa_mtime}.
By introducing the hybrid social-affiliation network, we can study a larger
user sample space than the largest connected component of user graph.

\begin{figure}[htp]
\centering
\begin{tikzpicture}[
und/.style={draw,thick,circle,minimum size=6pt,inner sep=0},
vnd/.style={draw=blue,thick,rectangle,minimum size=6pt,inner sep=0},
att/.style={left,fill=white,minimum size=0,inner sep=0,align=center}]

% wcc part of user graph
\node[und] (u11) at (0,0) {};
\node[und,above right = 0.5 and 0.3 of u11] (u12) {};
\node[und,above right = 0.4 and 1.0 of u11] (u13) {};
\node[und,right = 1 of u11] (u14) {};
\node[und,above right = 0.1 and 0.5 of u11] (u15) {};
\node[und,yshift=0.5cm] (u16) at (u11) {};
\draw[thick] (u16)--(u11)--(u12)--(u13)--(u14)--(u11)--(u15)--(u12) (u15)--(u14);
\node[draw,dashed,fit={(u11) (u12) (u13) (u14) (u15)}] (fu1) {};
\node[att,below=0.1 of fu1] {LCC of $G$\\ ($987,055$ users)};

% isolated parts of user graph
\node[und,right =3 of u11] (u21) {};
\node[und,xshift=0.1cm,yshift=0.5cm] (u22) at (u21) {};
\node[und,above right = 0.5 and 1.0 of u21] (u23) {};
\node[und,right = 1 of u21] (u24) {};
\node[und,xshift=0.5cm,yshift=0.65cm] (u25) at (u21) {};
\node[und,xshift=0.5cm,yshift=0.1cm] (u26) at (u21) {};
\node[und,xshift=0.3cm,yshift=0.3cm] (u27) at (u26) {};
\node[draw,dashed,fit={(u21) (u22) (u23) (u24) (u25)}] (fu2) {};
\draw[thick] (u22)--(u25) (u26)--(u27);
\node[att,below=0.1 of fu2] {Isolated parts of $G$\\ ($858,733$ users)};

% wcc part of actor graph
\node[vnd,yshift=1.9cm] (v11) at (u11) {};
\node[vnd,above right = 0.5 and 0.3 of v11] (v12) {};
\node[vnd,above right = 0.4 and 1.0 of v11] (v13) {};
\node[vnd,right = 1 of v11] (v14) {};
\node[vnd,above right = 0.1 and 0.5 of v11] (v15) {};
\node[vnd,yshift=0.55cm] (v16) at (v11) {};
\draw[thick,blue] (v11)--(v12)--(v13)--(v14)--(v15)--(v13) (v11)--(v14) (v15)--(v12)--(v16);
\node[draw,dashed,fit={(v11) (v12) (v13) (v14) (v15)}] (fa1) {};
\node[att,above=0.1 of fa1] {LCC of $G'$\\ ($1,114,065$ actors)};

% isolated parts of actor graph
\node[vnd,right =3 of v11] (v21) {};
\node[vnd,xshift=0.1cm,yshift=0.5cm] (v22) at (v21) {};
\node[vnd,above right = 0.5 and 1 of v21] (v23) {};
\node[vnd,right = 1 of v21] (v24) {};
\node[vnd,xshift=0.5cm,yshift=0.65cm] (v25) at (v21) {};
\node[vnd,xshift=0.5cm,yshift=0.1cm] (v26) at (v21) {};
\node[vnd,xshift=0.4cm,yshift=0.3cm] (v27) at (v26) {};
\node[draw,dashed,fit={(v21) (v22) (v23) (v24) (v25)}] (fa2) {};
\draw[thick,blue] (v27)--(v25) (v21)--(v26);
\node[att,above=0.1 of fa2] {Isolated parts of $G'$\\ ($1,658$ actors)};

\draw[line width=3pt,red,dashed] (fu1)--(fa1)--(fu2) (fu1)--(fa2)--(fu2);
\end{tikzpicture}

\caption{Hybrid social-affiliation network in Mtime. Dashed red lines
denote fan-relationships.}
\label{fig:hsa_mtime}
\end{figure}
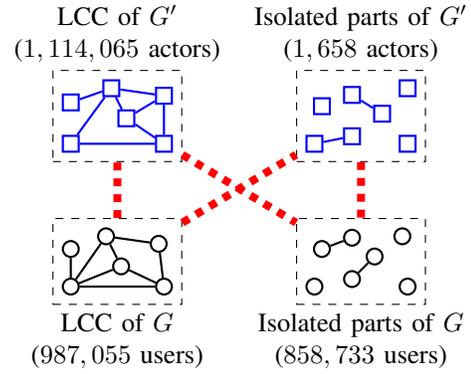

\noindent\textbf{Results.} 
Using the Mtime dataset, we demonstrate that \RW{T}\VS{A} and \RW{T}\RW{A}
methods can provide well estimates of user
characteristics.\footnote{Because not every user follows actors, we cannot
apply \VS{A} method on Mtime dataset.}
Although the user follower network is directed, we can build an undirected
version of target graph on-the-fly while sampling because a user's
in-coming and out-going neighbors are known once the user is
queried~\cite{Ribeiro2010}.

Results of method \RW{T}\VS{A} are depicted in Fig.~\ref{fig:mt_rwtvsa}.
From Figs.~\ref{fig:mt_rwtvsa}(a) and~(b), we observe
that \RW{T}\VS{A} can provide well estimates of in-degree and out-degree
distributions of the target graph.
From Figs.~\ref{fig:mt_rwtvsa}(c) and~(d), we observe
that when more nodes of the target graph are sampled, the estimation
accuracy increases (NRMSE decreases).
When more jumps are allowed by increasing $\alpha$ from $0.1$ to $10$, we
observe that the estimation accuracy of low degree nodes is increased from
Figs.~\ref{fig:mt_rwtvsa}(e) and~(f). 
This is consistent with the results on synthetic data.

\begin{figure}
\centering
\subfloat[In-degree estimates]{
	\includegraphics[width=.5\linewidth]{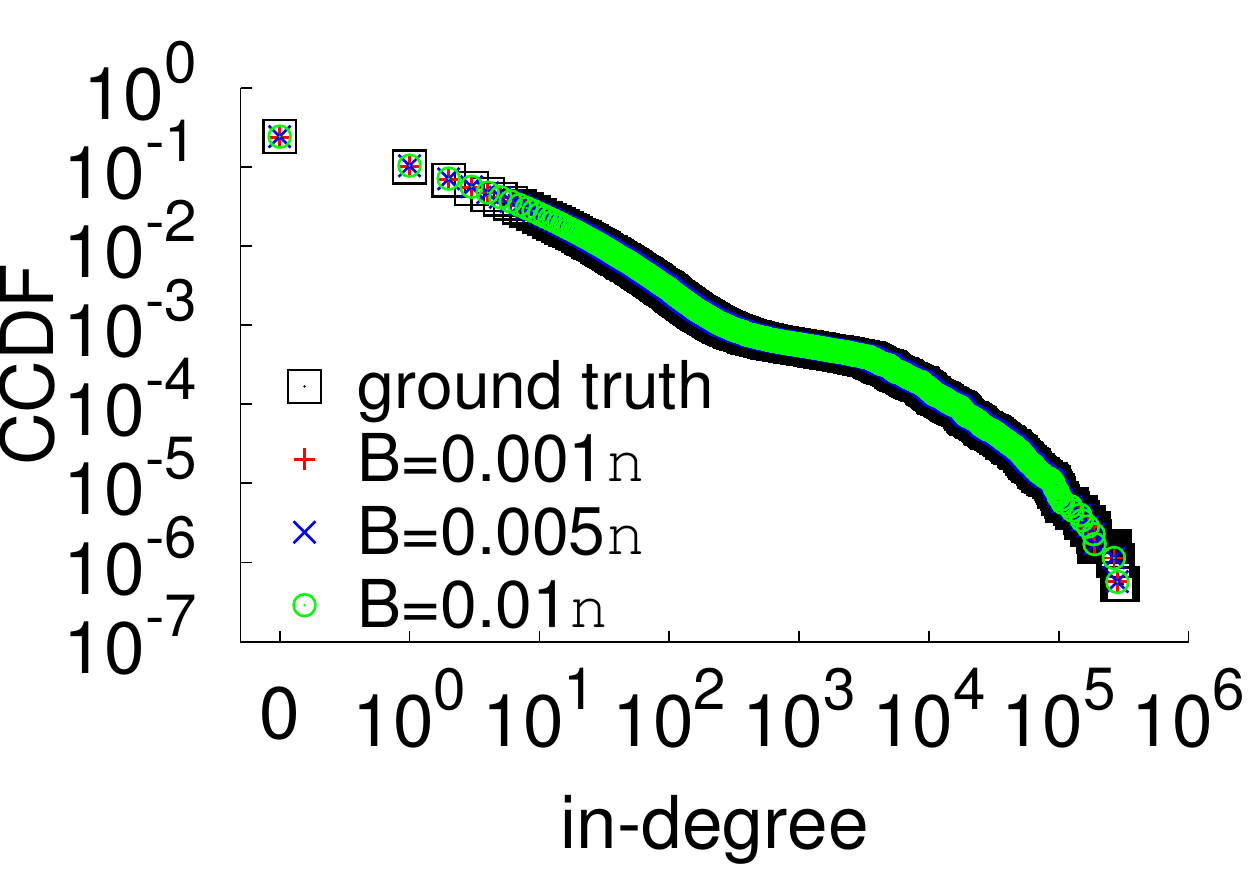}}
\subfloat[Out-degree estimates]{
	\includegraphics[width=.5\linewidth]{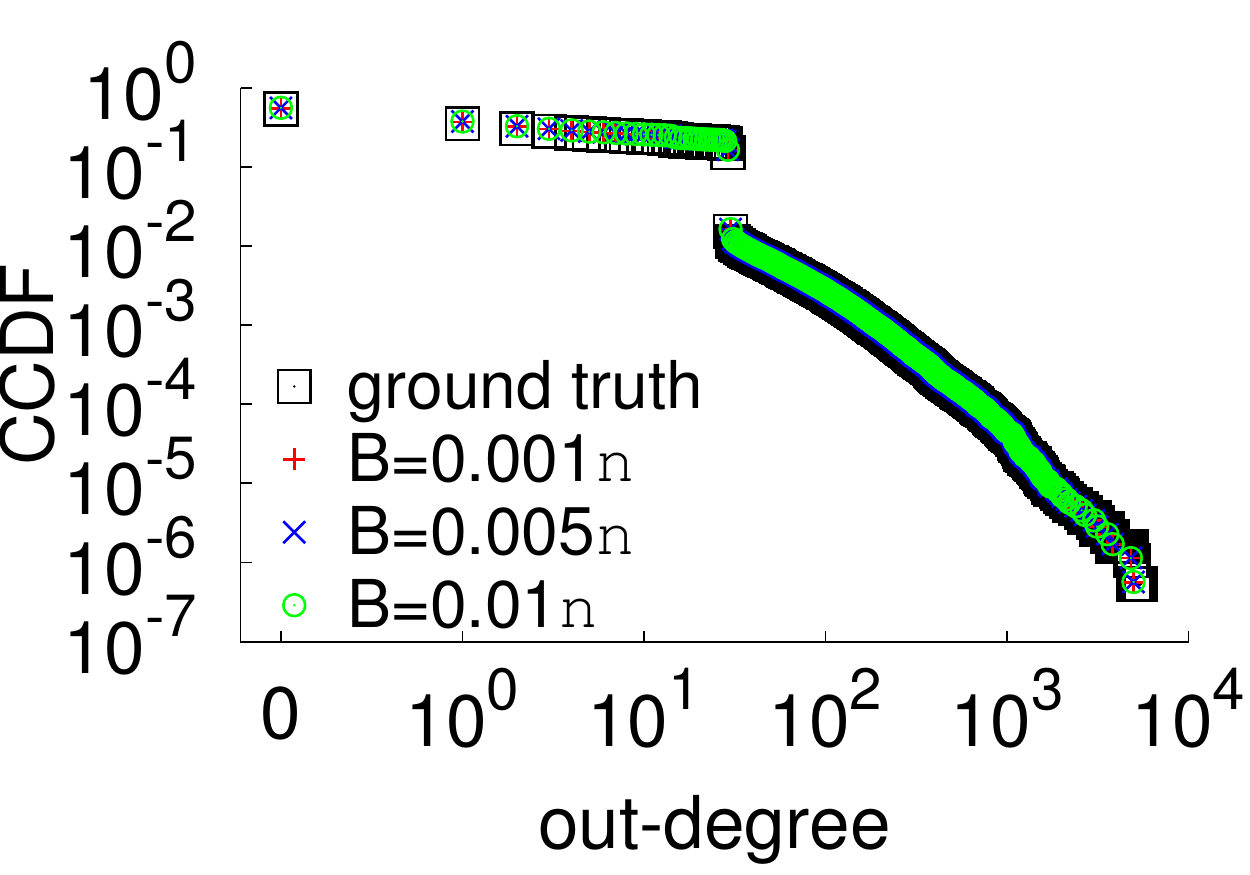}} \\
\subfloat[In-degree NRMSE]{
	\includegraphics[width=.5\linewidth]{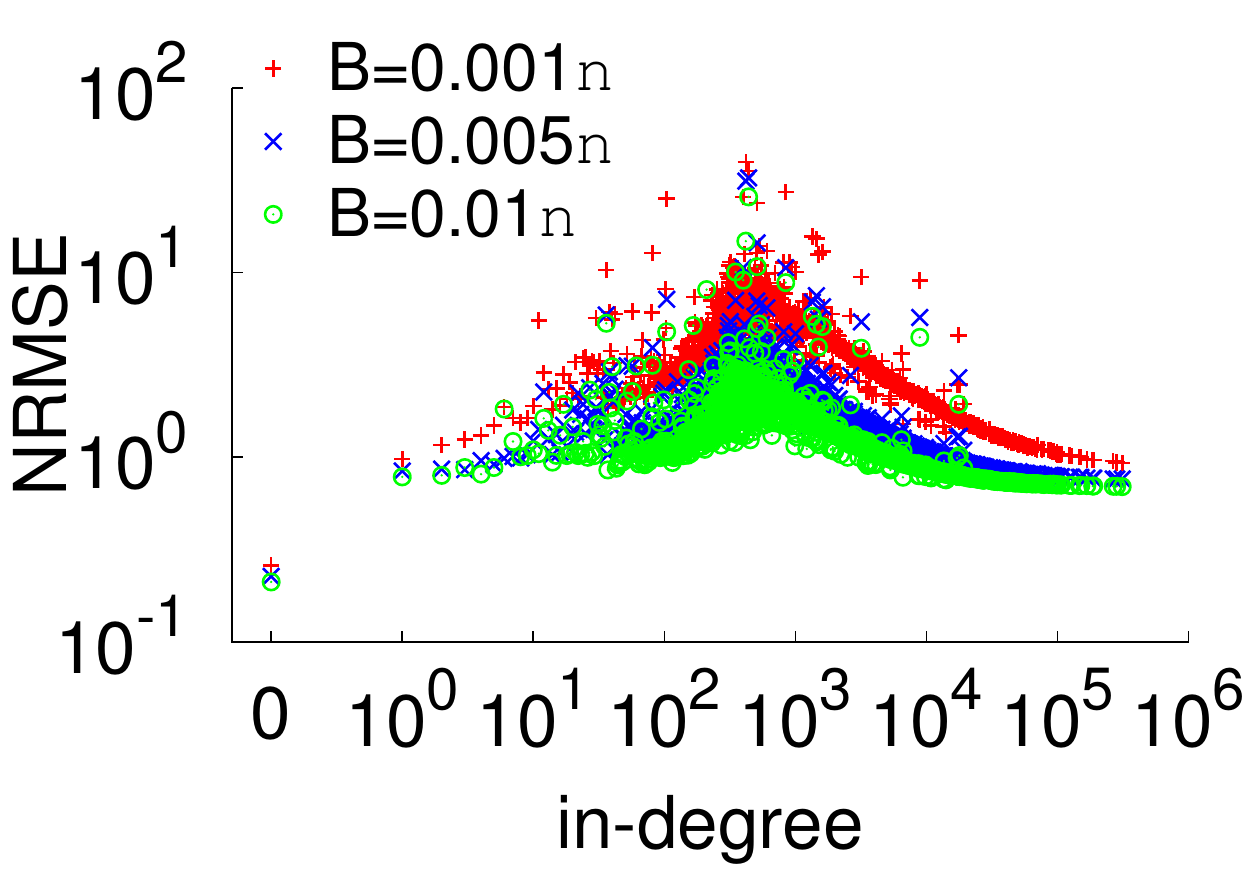}} 
\subfloat[Out-degree NRMSE]{
	\includegraphics[width=.5\linewidth]{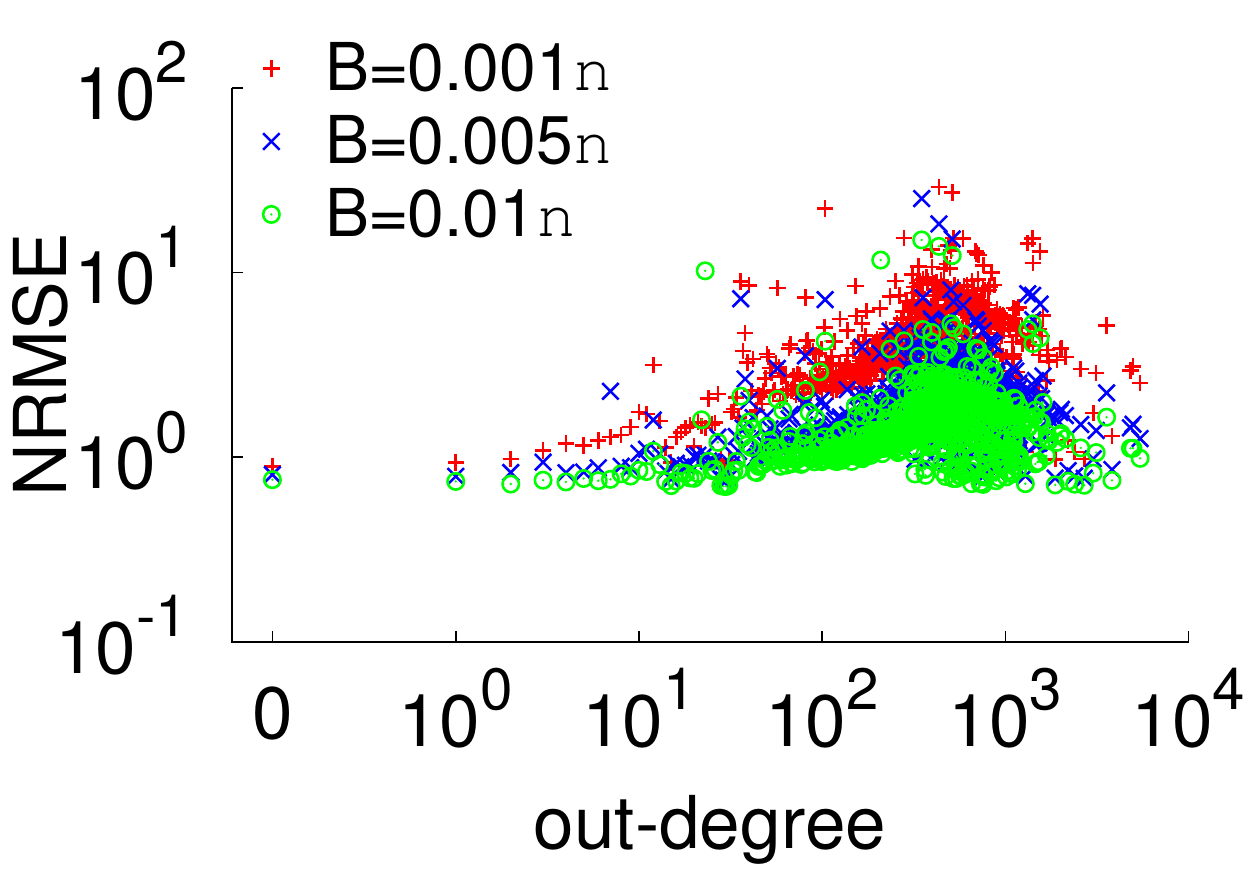}} \\
\subfloat[In-degree NRMSE]{
	\includegraphics[width=.5\linewidth]{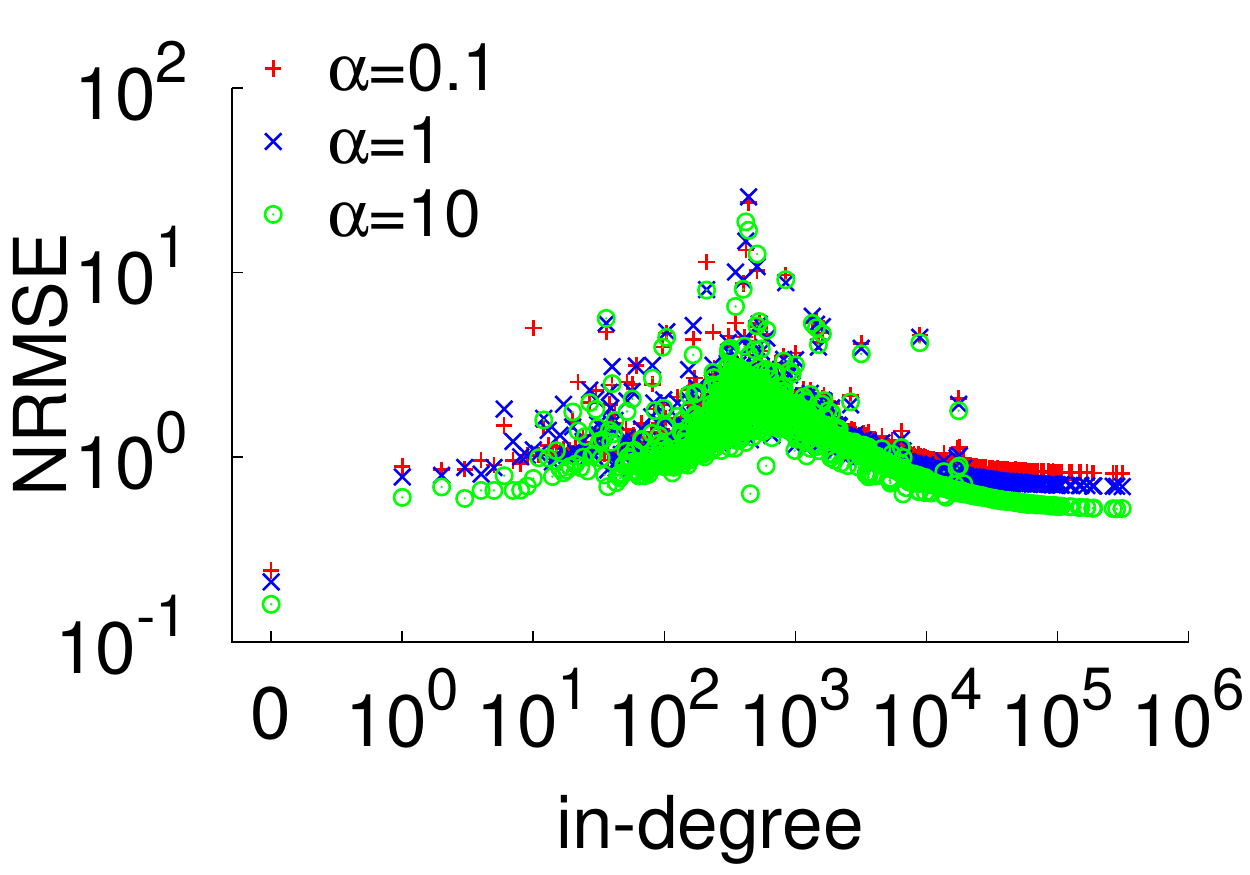}} 
\subfloat[Out-degree NRMSE]{
	\includegraphics[width=.5\linewidth]{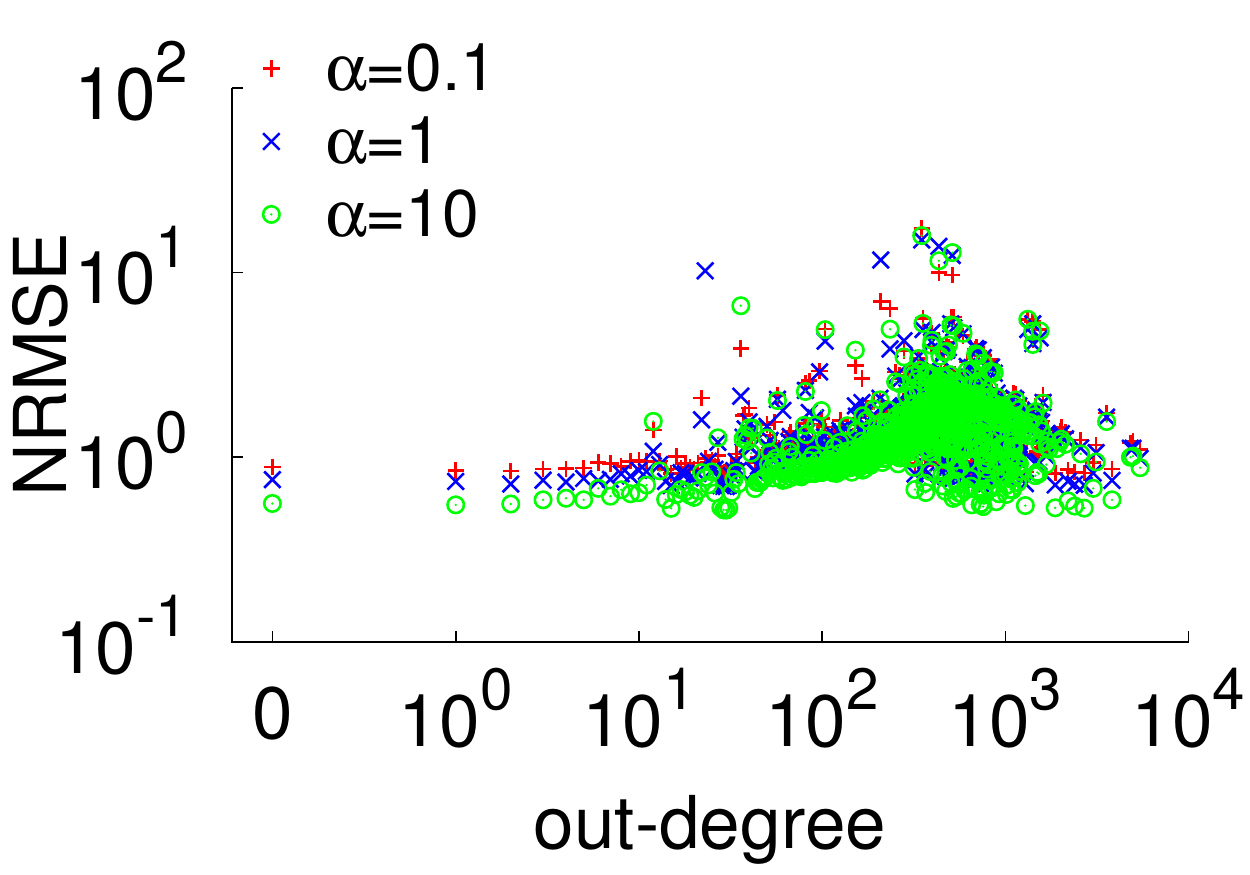}}
\caption{\RW{T}\VS{A} estimates and NRMSE. (We use $\alpha=1$ in
(a)-(d), and $B=0.01n$ in (e)-(f). Each result is averaged over $10,000$
runs.)}
\label{fig:mt_rwtvsa}
\end{figure}

Results of method \RW{T}\RW{A} are similar to the results of \RW{T}\VS{A},
and we show them in Fig~\ref{fig:mt_rwtrwa}.
First, from Figs.~\ref{fig:mt_rwtrwa}(a) and~(b) we observe that
\RW{T}\RW{A} can also provide well estimates of user characteristics.
Second, from Figs.~\ref{fig:mt_rwtrwa}(c) and~(d) we can find that when
more nodes of the target graph are sampled, NRMSE decreases thereby
increasing estimation accuracy.
Last, from Figs~\ref{fig:mt_rwtrwa}(e) and~(f) we find that when more jumps
are allowed (by increasing $\alpha$ and $\beta$), NRMSE for low degree
nodes decreases for both in-degree and out-degree estimates.

\begin{figure}
\centering
\subfloat[In-degree estimates]{
	\includegraphics[width=.5\linewidth]{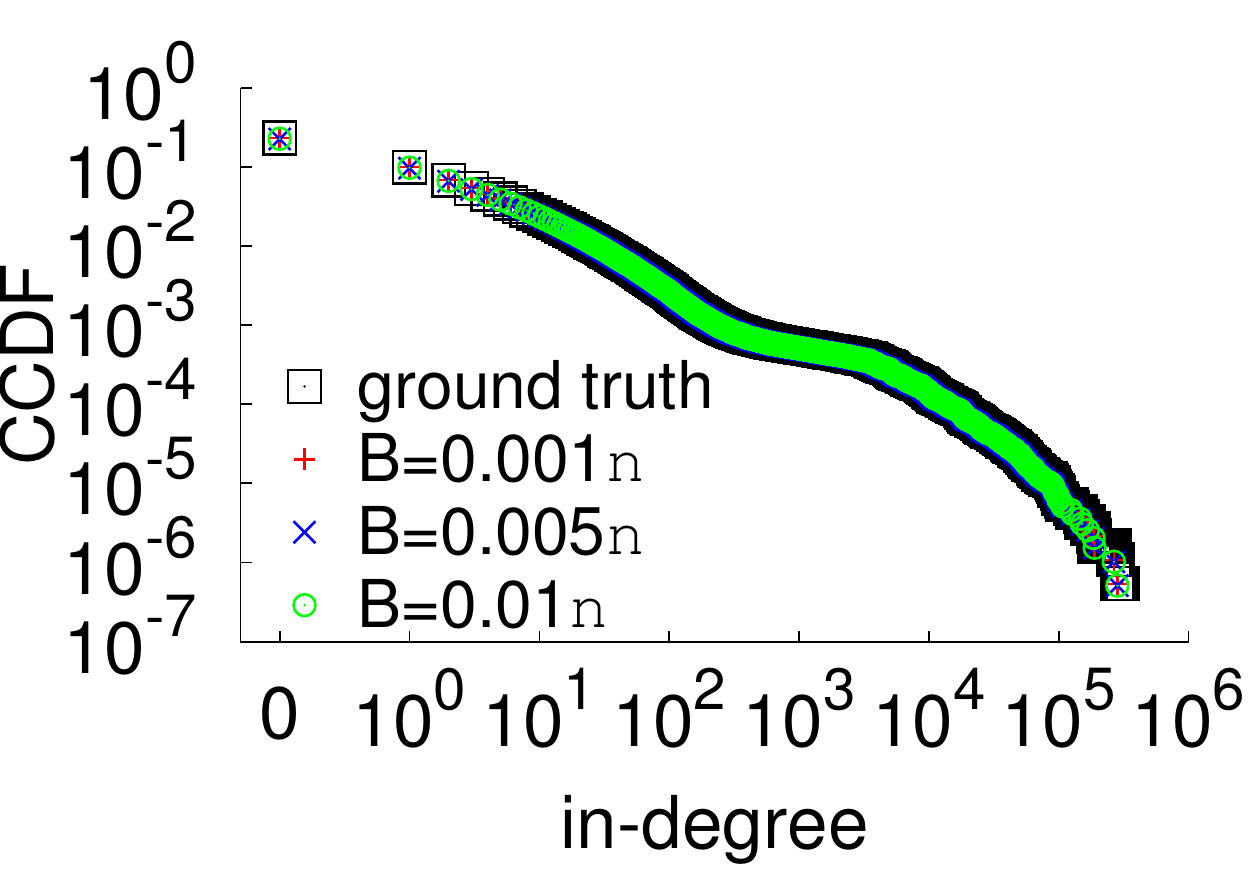}}
\subfloat[Out-degree estimates]{
	\includegraphics[width=.5\linewidth]{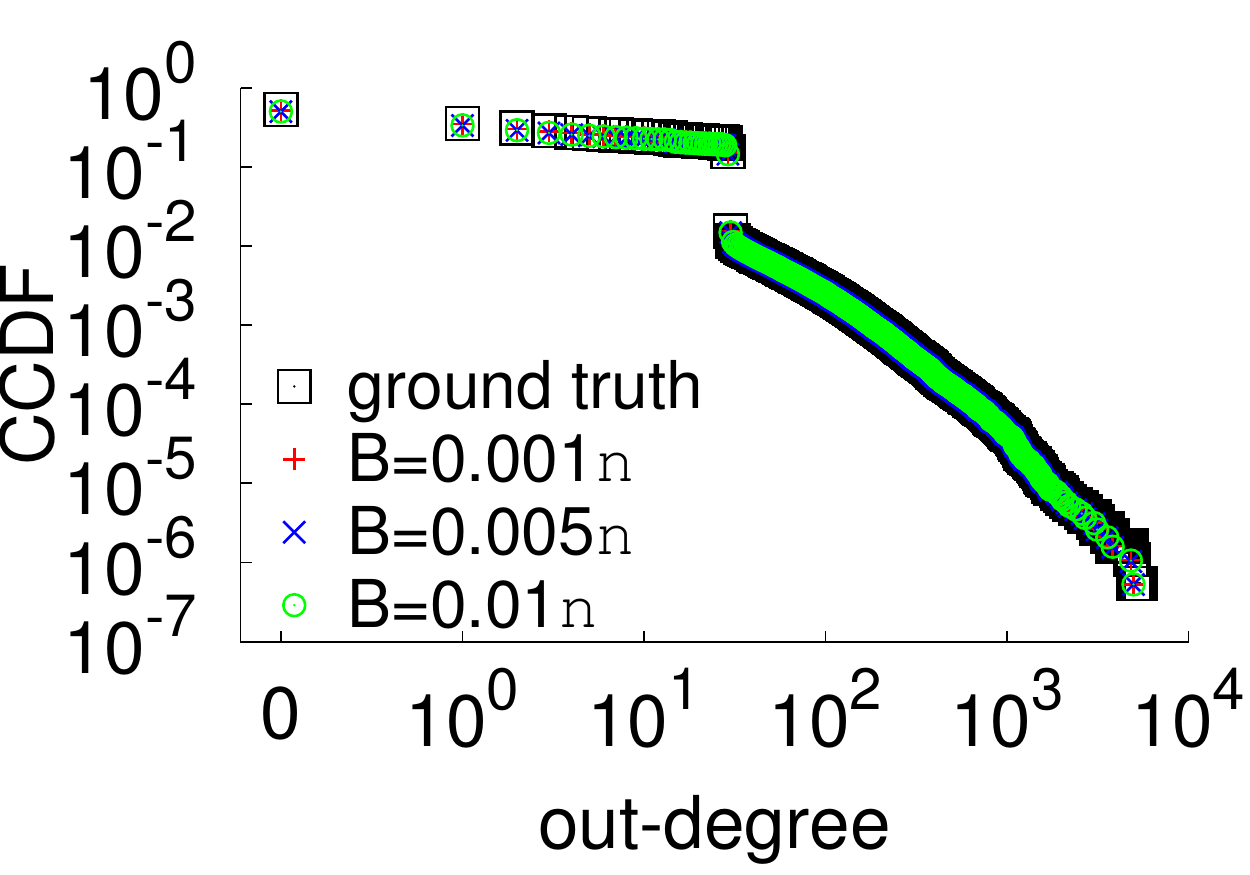}} \\
\subfloat[In-degree NRMSE]{
	\includegraphics[width=.5\linewidth]{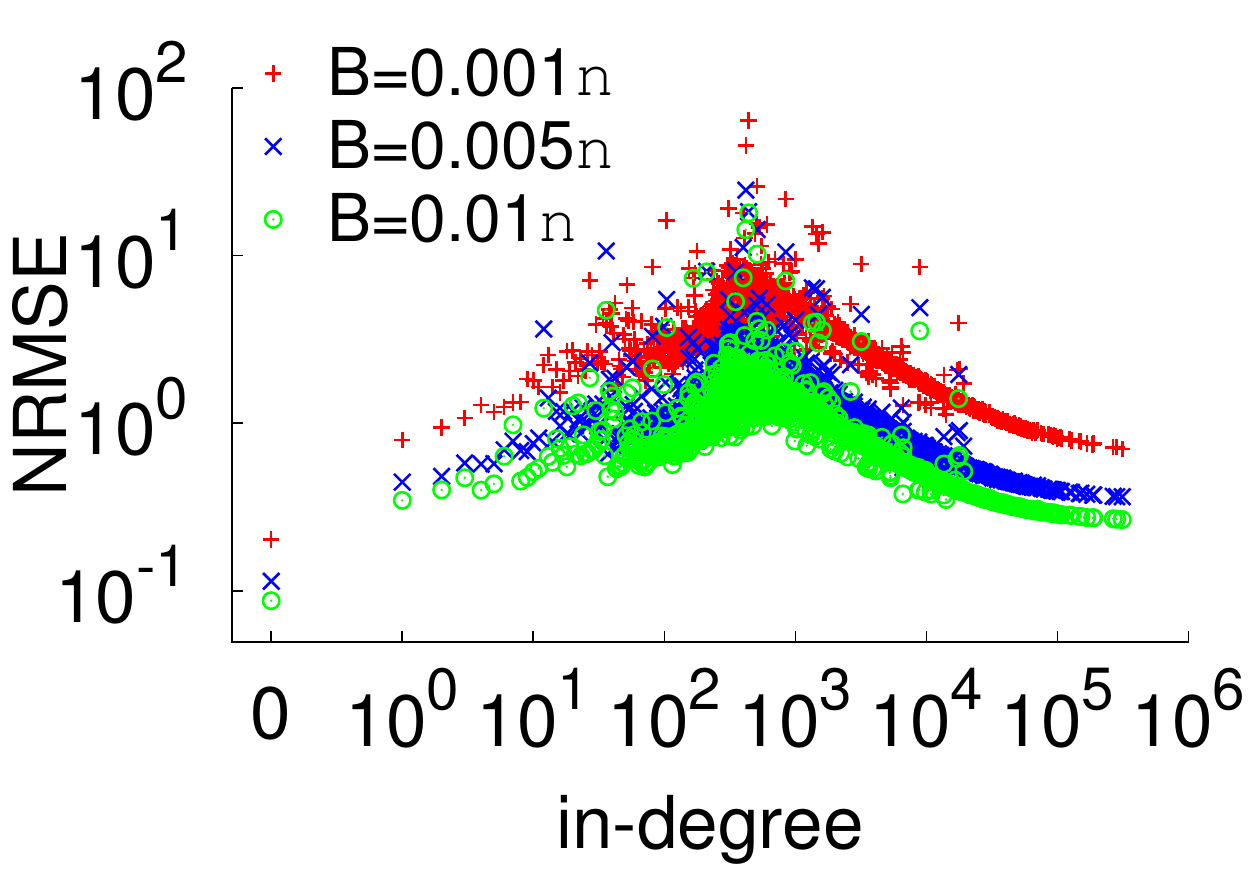}} 
\subfloat[Out-degree NRMSE]{
	\includegraphics[width=.5\linewidth]{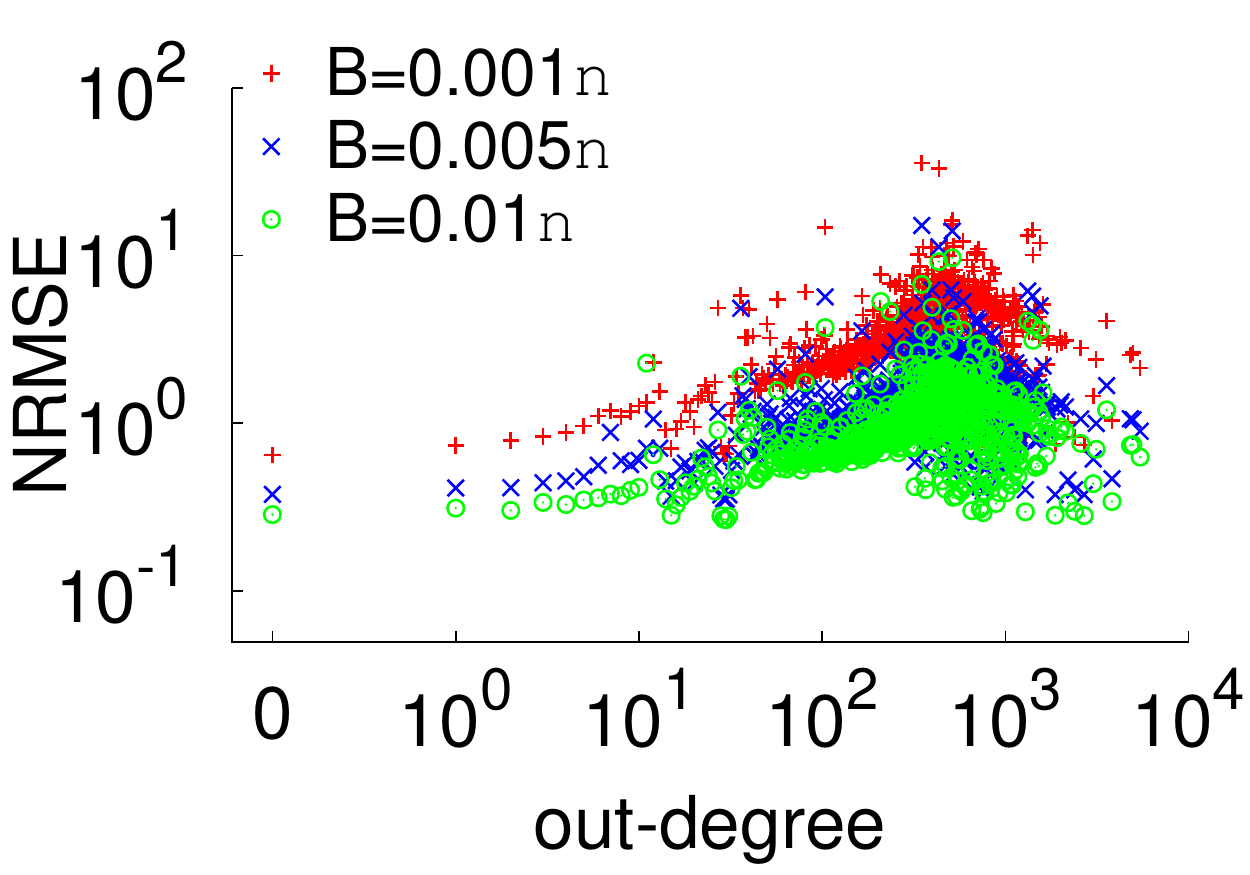}} \\
\subfloat[In-degree NRMSE]{
	\includegraphics[width=.5\linewidth]{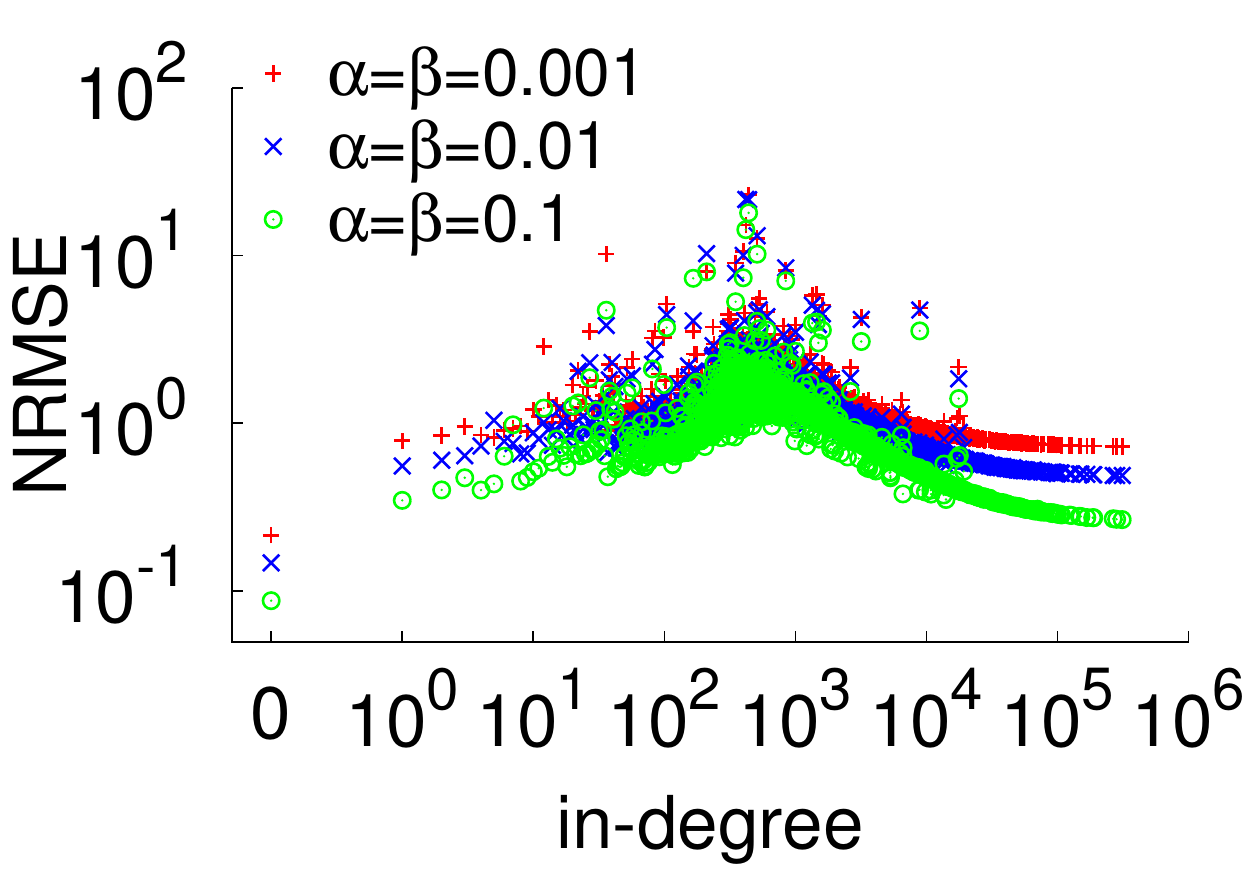}} 
\subfloat[Out-degree NRMSE]{
	\includegraphics[width=.5\linewidth]{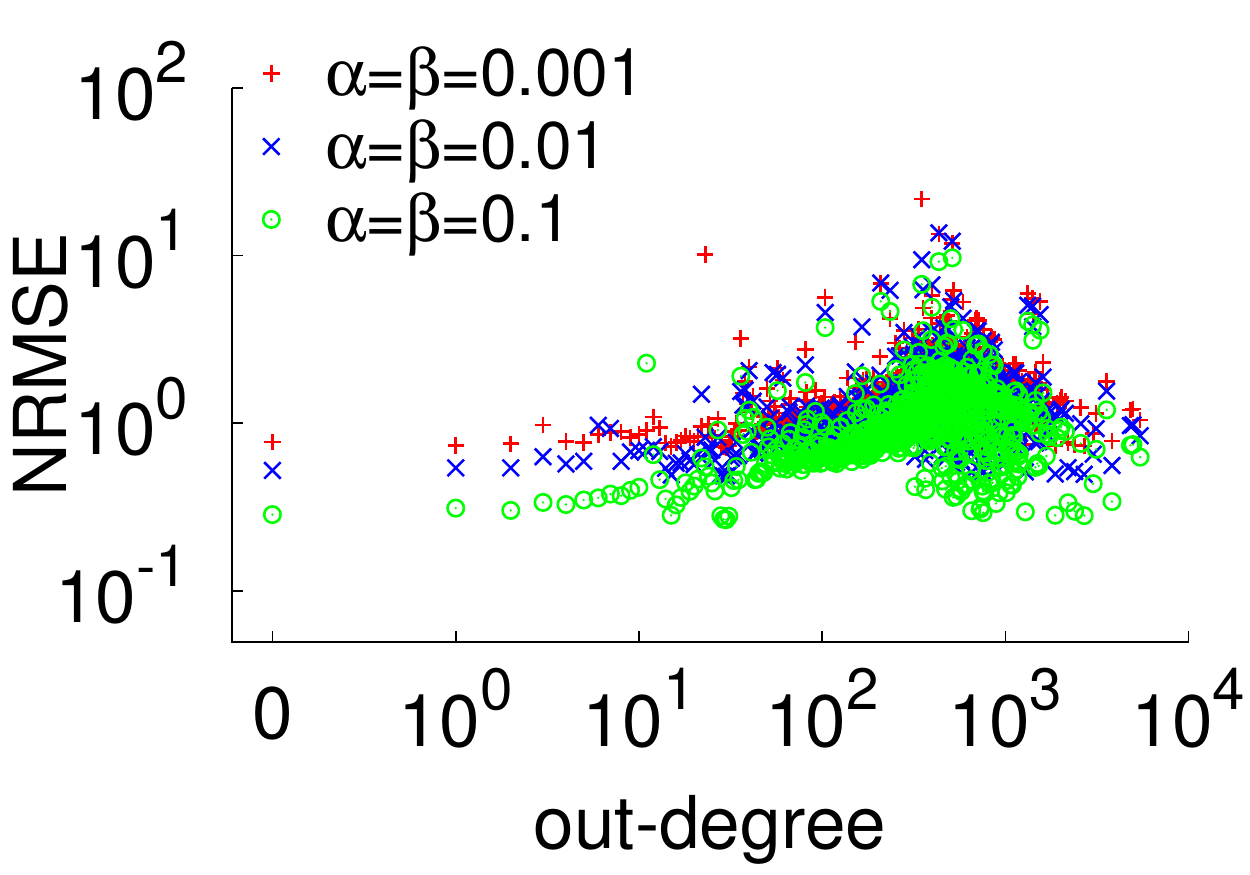}}
\caption{\RW{T}\RW{A} estimates and NRMSE. (We use $\alpha=\beta=0.1$ in
(a)-(d), and $B=0.01n$ in (e)-(f). Each result is averaged over $10,000$
runs.)}
\label{fig:mt_rwtrwa}
\end{figure}

\section{\textbf{Related Work}} \label{sec:related}

We review the related literature in this section.

Sampling methods, especially random walk based sampling methods, have been
widely used to characterize complex networks.
These applications include, but are not limited to, estimating peer
statistics in peer-to-peer networks~\cite{Gkantsidis2006,Massoulie2006},
uniformly sampling users from online social
networks~\cite{Gjoka2010,Gjoka2011}, characterizing structure properties of
large networks~\cite{Katzir2011,Hardiman2013,Seshadhri2013,Wang2014a}, and
measuring statistics of point-of-interests within an area on
maps~\cite{Wang2014} or venues in a region on LBSNs~\cite{Li2012,Li2014}.
The above literature is mostly concerned with sampling methods that seek to
\emph{directly} sample nodes (or samples) in target graphs (or sample
spaces).
However, direct sampling is not always efficient as we argued in this work.

When target graph (or sample space) can not be directly sampled or
direct sampling is inefficient, several methods based on graph manipulation
are proposed to improve sampling efficiency. 
For example, Gjoka et al.~\cite{Gjoka2011a} use different kinds
of relations (i.e., edges) to build a \emph{multigraph}, which is better
connected than any individual graph formed by only one kind of relations.
Therefore the random walk can converge fast on this multigraph.
Zhou et al.~\cite{Zhou2013} exploit several criteria to rewire target graph
on-the-fly so as to increase conductance and reduce mixing time of random
walks.
Our method differs from theirs that we do not manipulate target graphs.
We study a method on how to utilize auxiliary graph and affiliation graph
to assist sampling on target graph indirectly.

Birnbaum and Sirken~\cite{Birnbaum1965} designed a survey method for
estimating the number of diagnosed cases of a rare disease in a population.
Directly sampling patients of a rare disease is obviously inefficient so
they studied how to sample hospitals.
Their method motivates us to design the \VS{A} method. However, as we
pointed out, \VS{A} method cannot sample nodes that are not connected to
the auxiliary graph, and we overcome this problem by designing \RW{T}\VS{A}
and \RW{T}\RW{A}.
Our work also complements existing sampling methods such as random walk
with jumps~\cite{Avrachenkov2010,Ribeiro2012b} and Frontier
sampling~\cite{Ribeiro2010} by removing the necessity of uniform vertex
sampling on target graph.

\section{\textbf{Conclusion}} \label{sec:conclusion}

In this work we designed three sampling methods on a hybrid
social-affiliation network.
The concept of hybrid social-affiliation network can help sampling a graph
indirectly but efficiently.
The reason of effectiveness behind our methods lies in the improvement of
connectedness of target graph with the assistances of the other two
graphs.
We demonstrated the effectiveness of these sampling methods on both
synthetic and real datasets.
Our method complements existing methods in the area of graph sampling.

\bibliographystyle{abbrv}
%\bibliographystyle{ieeetr}

% that's all folks
\end{document}